\documentclass[11pt,a4paper]{article}

\usepackage[utf8]{inputenc}
\usepackage[T1]{fontenc}
\usepackage[british]{babel}
\usepackage{amsmath,amssymb,amsthm,mathtools}
\usepackage{bm}
\usepackage{braket}                
\usepackage{graphicx}
\usepackage{xcolor}
\usepackage{tikz}
\usetikzlibrary{positioning,arrows.meta,calc,decorations.pathmorphing,fit,backgrounds}
\usepackage{quantikz}
\usepackage{subcaption}
\usepackage{booktabs}
\usepackage{enumitem}
\usepackage[margin=1in]{geometry}
\usepackage[colorlinks=true,linkcolor=blue,citecolor=blue,urlcolor=blue]{hyperref}
\usepackage[capitalise,noabbrev]{cleveref}
\usepackage{algorithm}
\usepackage{algpseudocode}
\usepackage{tikz}
\usetikzlibrary{quantikz2}
\usepackage{subcaption}
\usepackage{soul}
\usepackage[numbers,sort&compress]{natbib}

\theoremstyle{plain}

\newtheorem{proposition}{Proposition}
\newtheorem{corollary}{Corollary}
\theoremstyle{definition}

\newtheorem{example}{Example}
\newtheorem{remark}{Remark}

\usepackage{xspace}

\definecolor{anccol}{HTML}{1F5FA5}    
\definecolor{anccolB}{HTML}{9A4A9B}    
\definecolor{datacol}{HTML}{222222}
\definecolor{seamcol}{HTML}{C8542A}

\title{\textbf{Error-detected surgery on Iceberg codes}}

\author{%
  Andrea Di Fini,$^{1}$\quad
  Samuel Crew,$^{2}$\thanks{ADF and SC contributed equally. Correspondence: \texttt{samuel.crew@qperfect.io}.}\quad
  Laura Pecorari,$^{1}$\quad
  Guido Pupillo$^{1,2,3}$
  \\[1ex]
  \footnotesize $^{1}$University of Strasbourg and CNRS, CESQ and ISIS (UMR 7006), aQCess, 67000 Strasbourg, France\\
  \footnotesize $^{2}$QPerfect, 23 Rue du Loess, 67200 Strasbourg, France\\
  \footnotesize $^{3}$Institut Universitaire de France (IUF), 75000 Paris, France
}

\date{\today}

\begin{document}

\newcommand{\lp}[1]{{\leavevmode\color{orange}[LP : #1]}}

\maketitle

\begin{abstract}
We construct explicit error-detecting surgery gadgets---small systems of auxiliary qubits and checks---for the high-rate Iceberg codes $[[2N,2N-2,2]]$, to perform fault-detected measurements of logical Pauli products. The construction follows the perspective of surgery as the gauging of a logical operator, regarded as a symmetry of the code. We give a complete classification of logical Pauli operators under the permutation automorphism group of the Iceberg code, reducing the construction to one gadget per orbit, and we verify with circuit-level simulations that the gadgets are fault-detecting, with the expected post-selected logical error rate. The gadgets require reconfigurable long-range connectivity, available on platforms such as neutral-atom arrays, making an error-detected demonstration of Pauli-based computation a natural near-term experiment. The paper doubles as a self-contained introduction to gauging and code surgery, developed alongside a simple worked example.
\end{abstract}

\section{Introduction}
\label{sec:intro}

A central challenge in fault-tolerant quantum computation is performing logical gates on encoded data. Pauli-based computation carries out the computation by a sequence of adaptive logical Pauli-product measurements, with a supply of $\ket{T}$ resource states \cite{bravyi2016trading}, rather than applying logical gates directly. The elementary operation here is then the fault-tolerant measurement of a logical Pauli, for which code surgery---building on code deformation and surface-code lattice surgery---is a standard route \cite{bombin2009quantum,horsman2012surface,cowtan2024css}. We review both of these standard ideas with worked examples in \cref{sec:example}.

Recent work of Williamson and Yoder~\cite{williamson2026low} realises surgery in general as the gauging of a logical operator, regarded as a symmetry of the code. Namely, a logical Pauli commutes with every stabiliser check and so leaves the code space invariant, thereby generating a symmetry of the stabiliser Hamiltonian whose ground space is the code space. The mechanism to measure the operator is then code deformation. Rather than acting on the logical operator directly, one adjoins an ancillary system of auxiliary qubits and checks, deforming the code into a larger \emph{merged} code in which the operator is no longer logical but a product of stabilisers. Its eigenvalue is then read off from local check measurements, without disturbing the remaining logical qubits, and finally a reverse deformation, code \emph{split}, returns to the original code. Concretely, the auxiliary qubits are arranged on the edges of a graph over the support of the operator, and the new checks are local Gauss laws on the vertices whose product recovers the operator; this ancillary code system is the surgery \emph{gadget}, and the merge--split procedure described above is the surgery primitive. Other low-overhead approaches with different ancillary system constructions also exist \cite{cohen2022low}, including bridge systems for joint measurements across codes \cite{cross2024improved}. However, for general high-distance codes, these and the auxiliary graph design can be opaque, and near-term experimental demonstrations at scale remain challenging.

In this work, we make a number of contributions. We construct simple explicit error-detecting surgery gadgets for the Iceberg codes $[[2N,2N-2,2]]$ and explain how the merge--split operations may be implemented on hardware with reconfigurable connectivity.\footnote{Neutral-atom platforms are one example of such reconfigurability \cite{bluvstein2022quantum,bluvstein2024logical}, though mid-circuit measurement there is comparatively slow. Nonetheless, surgery is an important route to logic on high-rate codes, where transversal constructions are unavailable in the general case.} We realise the full code surgery machinery --- gauge, merge, split, together with compilation by code automorphisms --- and we perform careful circuit-level numerical simulations to confirm that the gadgets are fault-detecting with the expected $O(p^2)$ suppression of the post-selected logical error rate as a function of the physical error probability $p$. Surgery via gauging is among the most resource efficient routes to fault-tolerant logic on high-rate quantum low-density-parity-check (LDPC) codes \cite{cohen2022low,cross2024improved,williamson2026low,swaroop2026universal}, and an early error-detected demonstration of its basic primitive is therefore a natural experimental target. Rather than pursuing ever larger codes with correspondingly intricate gadgets, here we provide one of the smallest possible examples, arguing that it admits a near-term demonstration of the surgery primitive on reconfigurable hardware. Finally, the paper doubles as a self-contained introduction to gauging, lattice surgery, Pauli-based computation and code automorphisms.

\paragraph{Organisation.}
The paper is organised as follows. We begin with a warm-up example: \Cref{sec:example} discusses Pauli-based computation and gauges a logical symmetry of $H=-X^{\otimes4}-Z^{\otimes4}$, exhibiting the basic surgery mechanism. \Cref{sec:iceberg} reviews $[[2N,2N-2,2]]$ Iceberg codes more generally; we discuss their logical operators and provide a detailed treatment of the automorphism group $S_{2N}$. In \Cref{sec:gadgets} we construct explicit surgery gadgets and simulate them numerically.


\section{Preliminaries}
\label{sec:example}

In this section we review two key ideas. The first is Pauli-based computation, a way to perform universal quantum computation using a series of Pauli-product measurements on qubits. The second is the basic idea of code surgery as the gauging of a global symmetry, allowing us to measure such a logically encoded Pauli string. For the latter, we illustrate the mechanism on a simple spin system that will turn out to be the smallest Iceberg code example (\cref{sec:iceberg}).

\subsection{Pauli-based computation}
\label{sec:pbc-step}

A universal fault-tolerant computation is typically performed by applying a Clifford gate set together with a non-Clifford resource, most commonly the $T$ gate. Pauli-based computation (PBC) reorganises the computation into a form that is well suited to data encoded in a general LDPC code. Rather than applying logical gates, one replaces the entire computation by a (possibly adaptive) sequence of logical Pauli-product measurements, supplemented by a supply of $\ket{T}=(\ket{0}+e^{i\pi/4}\ket{1})/\sqrt2$ resource states \cite{bravyi2005universal,bravyi2016trading}. 

The way that this works follows from a simple observation about Clifford circuits. Measuring $Z_q$ on the output of a Clifford $C$ is the same as measuring the conjugated operator $C^\dagger Z_q C$ on the input state, and since $C$ is Clifford this conjugate is again a Pauli operator. A computational basis readout following any Clifford therefore compiles to a Pauli measurement on the input, with the Clifford itself absorbed into the choice of Pauli. Iterating, a Clifford circuit followed by a $Z$-basis readout reduces to a sequence of Pauli-product measurements whose joint outcome distribution reproduces that of the original circuit, up to a Pauli-frame correction that can be tracked efficiently in classical software.

As a concrete example, let us take four qubits and the GHZ preparation circuit
\begin{equation}
  C = \mathrm{CNOT}_{2\to3}\,\mathrm{CNOT}_{1\to2}\,\mathrm{CNOT}_{0\to1}\,\mathrm{H}_0,
\end{equation}
followed by a $Z$-basis readout of all four qubits. Conjugating each readout back through the circuit gives
\begin{equation}
  C^\dagger Z_0 C = X_0,\quad
  C^\dagger Z_1 C = X_0Z_1,\quad
  C^\dagger Z_2 C = X_0Z_1Z_2,\quad
  C^\dagger Z_3 C = X_0Z_1Z_2Z_3,
\end{equation}
so the entire computation collapses to four Pauli measurements with outcomes $(m_0,\dots,m_3)$. The output bits of the original circuit are recovered from these outcomes up to a Pauli-frame correction. The circuits are illustrated in  \cref{fig:pbc-warmup}. We return to this circuit in \cref{sec:gadgets}, where each measurement is compiled to a surgery gadget on a single $[[6,4,2]]$ Iceberg code block as a worked example.

\begin{figure}[t]
  \centering
  \begin{subfigure}[c]{0.42\textwidth}
    \centering
    \begin{quantikz}[row sep=0.35cm, column sep=0.40cm]
      \lstick{$q_0$} & \gate{H} & \ctrl{1} & \qw      & \qw      & \meter{} & \rstick{$z_0$}\setwiretype{c} \\
      \lstick{$q_1$} & \qw      & \targ{}  & \ctrl{1} & \qw      & \meter{} & \rstick{$z_1$}\setwiretype{c} \\
      \lstick{$q_2$} & \qw      & \qw      & \targ{}  & \ctrl{1} & \meter{} & \rstick{$z_2$}\setwiretype{c} \\
      \lstick{$q_3$} & \qw      & \qw      & \qw      & \targ{}  & \meter{} & \rstick{$z_3$}\setwiretype{c}
    \end{quantikz}
    \caption{Gate circuit with $Z$-basis readout.}
    \label{fig:pbc-gates}
  \end{subfigure}
  \hfill
  \raisebox{3.0em}{\large$\;\equiv\;$}
  \hfill
  \begin{subfigure}[c]{0.46\textwidth}
    \centering
    \begin{quantikz}[row sep=0.35cm, column sep=0.55cm]
      \lstick{$q_0$} & \gate{M_X} & \gate[2]{M_{XZ}} & \gate[3]{M_{XZZ}} & \gate[4]{M_{XZZZ}} & \qw \\
      \lstick{$q_1$} & \qw      & \qw          &               &                & \qw \\
      \lstick{$q_2$} & \qw      & \qw          & \qw           &                & \qw \\
      \lstick{$q_3$} & \qw      & \qw          & \qw           & \qw            & \qw
    \end{quantikz}
    \caption{Compiled PBC form.}
    \label{fig:pbc-ppm}
  \end{subfigure}
  \caption{Pauli-based compilation of the four-qubit GHZ example. The Clifford circuit (a) compiles to the sequence of Pauli-product measurements (b), with outcomes $(m_0,\dots,m_3)$ reproducing the readout bits up to a Pauli frame correction.}
  \label{fig:pbc-warmup}
\end{figure}

Non-Clifford gates may also be performed in the same framework. A $T$ gate is not applied directly but teleported into the computation by consuming a resource state $\ket{T}=(\ket{0}+e^{i\pi/4}\ket{1})/\sqrt2$: one adjoins the magic state as an extra qubit and performs a joint $Z\!\otimes\!Z$ measurement between it and the data qubit, followed by a Clifford correction $S$ conditioned on the outcome. Each $T$ gate thus enlarges the register by one ancilla, contributes one further Pauli-product measurement, and leaves the residual Clifford carrying a conditional $S$. This is what makes the sequence adaptive since a conditional $S$ is Clifford but not Pauli, so commuting it past a later measurement rotates the affected factors --- for instance $S X S^\dagger = Y$ --- and the Pauli actually measured depends on the earlier outcome. The measurements must therefore be taken in order, each measurement type conditioned on the results so far. 

A fault-tolerant realisation of PBC on encoded data therefore reduces to a fault-tolerant realisation of a Pauli string measurement primitive. The remainder of the paper constructs an error-detecting version of this primitive for data encoded in an Iceberg code. These ideas can be made transparent on a simple spin system, our warm-up example, to which we now turn.

\subsection{Warm-up example of code surgery}
\label{sec:warmup}
Pauli-based computation requires a primitive that non-destructively measures a logical Pauli operator. As a warm-up example, we show how such a measurement is realised by {\it gauging}, working with a simple system of four spins $\mathcal{H} = (\mathbb{C}^2)^{\otimes 4}$ with dynamics described by 
\begin{equation}
    H = -X_0X_1X_2X_3 - Z_0Z_1Z_2Z_3 .
\end{equation}
The ground space is the simultaneous $+1$ eigenspace of the two commuting operators $s_X = X_0X_1X_2X_3$ and $s_Z = Z_0Z_1Z_2Z_3$. This ground space is the code space of the stabiliser code~\cite{gottesman1997stabilizer,poulin2005stabilizer} with stabiliser group $\mathcal{S} = \langle s_X, s_Z\rangle$, encoding two logical qubits.\footnote{This is the smallest member of the Iceberg code family introduced in \cref{sec:iceberg}, the $N=2$ case. We treat it here purely as a spin system, and defer the general construction to \cref{sec:gadgets}.} Its logical operators are
\begin{equation}
    \bar{X}_0=X_0X_1,\quad \bar{Z}_0=Z_1Z_3,\qquad \bar{X}_1=X_0X_2,\quad \bar{Z}_1=Z_2Z_3.
\end{equation}
Let us consider the task of measuring $\bar{X}_0$. The idea is to interpret the operator as a global symmetry of $H$ and then gauge that symmetry so that it becomes a local constraint of an associated gauged theory \cite{williamson2026low}. The important feature of this construction is that the ground space of the gauged Hamiltonian coincides with the code space of a new code, the \emph{merged code}, in which the $\bar{X}_0$ eigenvalue can be reconstructed by measuring local stabilisers.

The operator $\bar{X}_0 = X_0X_1$ commutes with $H$ and so is a symmetry ($\bar{X}_0^\dagger H \bar{X}_0 = H$), acting on operators by conjugation. Since it anticommutes with $Z_0$ and $Z_1$, the symmetry acts as $Z_i \mapsto -Z_i$ for $i = 0, 1$, leaving all other single-site Pauli operators unchanged.

Gauging this symmetry means promoting the global $\mathbb{Z}_2$ generated by $\bar{X}_0$ to an independent action on each of qubits $0$ and $1$, enlarging the symmetry group to $\mathcal{G}=\mathbb{Z}_2\times\mathbb{Z}_2$, with generic element $T(f_0,f_1)=X_0^{f_0}X_1^{f_1}$, $f_0,f_1\in \mathbb{Z}_2$. The Hamiltonian is not invariant under $\mathcal{G}$ since, while $s_X$ commutes with every $T(f_0,f_1)$, the stabiliser $s_Z$ transforms as
\begin{equation}
    T(f_0,f_1)^\dagger\, s_Z\, T(f_0,f_1) = (-1)^{f_0+f_1}\, s_Z,
\end{equation}
differing from $s_Z$ precisely when $f_0+f_1$ is odd. To restore invariance, we introduce a gauge qubit $\tau_{01}$ on a link joining sites $0$ and $1$, extending the Hilbert space to $\mathcal{H}' = (\mathbb{C}^2)^{\otimes 4}_{\text{matter}} \otimes \mathbb{C}^2_{\text{gauge}}$. We denote its Pauli operators $\tau^X$ and $\tau^Z$, placing the Pauli type in a superscript to distinguish gauge operators from those acting on the matter qubits, and we drop the link label since there is only one link.

The operator $\tau^Z$ transforms as $\tau^Z \mapsto (-1)^{f_0+f_1}\tau^Z$ (with $\tau^X$ invariant) so that under a gauge transformation it acquires the product of the $\mathbb{Z}_2$ charges at the two endpoints of its link. This is exactly how a parallel transport variable on a link responds to gauge rotations at its endpoints in a lattice gauge theory \cite{kogut1975hamiltonian,kogut1979introduction}, which is why $\tau^Z$ is referred to as a gauge field. Its transformation ensures that the minimally coupled term
\begin{equation}
    \tilde{s}_Z \equiv s_Z\, \tau^Z
\end{equation}
is gauge invariant. Consistency then fixes the dressed generators of $\mathcal{G}$ on $\mathcal{H}'$ as
\begin{equation}
    G_0=X_0\,\tau^X, \qquad G_1=X_1\,\tau^X,
\end{equation}
where each bare $X_i$ is dressed by $\tau^X$ to account for the degree of freedom of the link.

The minimal coupling prescription then replaces $s_Z$ by $\tilde{s}_Z$, giving the gauged Hamiltonian
\begin{equation}
    \tilde{H} = -X_0X_1X_2X_3 - Z_0Z_1Z_2Z_3\,\tau^Z,
\end{equation}
subject to the Gauss-law constraints $G_i\,\ket{\psi} = \ket{\psi}$ for $i=0$ and $i=1$. Its ground space, restricted to this physical subspace, is the code space of a stabiliser code with group $\langle s_X,\,\tilde{s}_Z,\,G_0,\,G_1\rangle$, the \emph{merged code}. Counting gives five qubits with four independent stabilisers, so a single logical qubit remains, with explicit corresponding Pauli operator representatives
\begin{equation}
    \bar{X}=X_0X_2,\qquad \bar{Z}=Z_2Z_3.
\end{equation}
Gauging has thus consumed the logical qubit associated to $\bar{X}_0=X_0X_1$ and in the merged code $\bar{X}_0$ is no longer logical but a product of stabilisers, $\bar{X}_0=G_0G_1$. Following a suitable initialisation and measurement protocol, reviewed later in \cref{sec:gadgets}, the eigenvalue of $\bar{X}_0$ is recovered by repeatedly measuring the merged-code stabilisers and computing $G_0G_1$. 

This four-spin example already contains the essential gauging mechanism of code surgery. In the remainder of the work we generalise to the Iceberg codes, of which this is the smallest example, and build fault-detecting schemes to measure logical operators. The ancilla system introduced to gauge an operator --- here just the single gauge qubit $\tau$ --- is referred to as a \emph{surgery gadget}, and in \cref{sec:gadgets} we construct gadgets for the general Iceberg code.

\section{The Iceberg code}
\label{sec:iceberg}

The Iceberg code \cite{gottesman1998theory} has a very high encoding rate (the logical to physical qubit ratio) but distance $d=2$, so it can detect but not correct errors. The Iceberg nomenclature, and its role as a low-overhead target for near-term hardware, are due to a trapped-ion demonstration \cite{self2024protecting}, where post-selection on the detection of single faults suppresses the effective logical error rate. Iceberg codes have since also been demonstrated on neutral-atom platforms, with experiments ranging from logical state preparation and circuits to full error-detected applications \cite{reichardt2024logical,zhang2026logical,lib2026velocity,mathiot2026benchmarking}. 

In this section we review the Iceberg code and study its permutation automorphism group. Automorphisms partition the logical Pauli operators of the code into orbits, and, as we will see in more detail in the following, operators in the same orbit can be measured by the same surgery gadget (constructed in \cref{sec:gadgets}) so that the measurement of exponentially many logical Paulis requires only polynomially many gadgets.

\subsection{Definition and logical operators}
\label{sec:iceberg-def}

For $N \ge 2$, the Iceberg code $[[2N,2N-2,2]]$ is the Calderbank–Shor–Steane (CSS) code on $2N$ data qubits defined by the two weight-$2N$ stabiliser generators
\begin{equation}
  s_X = X^{\otimes 2N},\qquad s_Z = Z^{\otimes 2N},
\end{equation}
or equivalently as the ground space of the spin Hamiltonian $H = -X^{\otimes 2N} - Z^{\otimes 2N}$; the warm-up of the previous section is the $N=2$ case. We illustrate a physical layout of the qubits in \Cref{fig:iceberg-complex}.

\begin{figure}[t]
  \centering
  \includegraphics[width=0.5\textwidth]{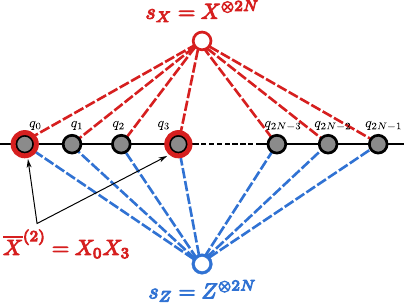}
  \caption{A physical layout of the $[[2N,2N-2,2]]$ Iceberg code qubits.}
  \label{fig:iceberg-complex}
\end{figure}

The code encodes $2N-2$ logical qubits, which we index by $i\in\{0,\dots,2N-3\}$ with logical operator representatives
\begin{equation}\label{eq:logop}
    \bar{X}^{(i)}=X_0 X_{i+1},\qquad \bar{Z}^{(i)}=Z_{i+1}Z_{2N-1}.
\end{equation}
In this representation qubit $0$ is shared by all $\bar{X}^{(i)}$ and qubit $2N-1$ by all $\bar{Z}^{(i)}$, while each middle qubit $1,2,\dots,2N-2$ carries one logical qubit. These are minimal weight representatives, so the code has distance $d=2$. 

As argued by Chao and Reichardt \cite{chao2018fault}, measuring the stabilisers $s_X$ and $s_Z$ fault-tolerantly requires a flag qubit. Bare extraction is not fault tolerant: since every even-weight single-type Pauli string is a logical operator, an ancilla fault propagates as a hook error that is undetectable for any CNOT ordering. A flag qubit sandwiched around the syndrome extraction CNOTs catches these errors and restores circuit-level distance two.

\subsection{Automorphism group}
\label{sec:aut}

The permutation group $S_{2N}$ acts on the physical Hilbert space $\mathcal{H} =(\mathbb{C}^{2})^{\otimes 2N}$ by permuting the tensor factors. On a computational basis state for $\sigma \in S_{2N}$ a permutation acts as
\begin{equation}
   \sigma \ket{i_0 i_1 \cdots i_{2N-1}} = \ket{i_{\sigma^{-1}(0)} i_{\sigma^{-1}(1)} \cdots i_{\sigma^{-1}(2N-1)}}.
\end{equation}
The (permutation) automorphism group of a code is defined as the subgroup of permutations that leave the code space (\textit{i.e.} the ground space in the language of the warm-up example) invariant \cite{sayginel2025fault,berthusen2025automorphism}. In general, the permutation automorphism group acts\footnote{A permutation of qubits maps any Pauli string to a Pauli string of the same type, so conjugation by it sends the stabiliser group to itself and normalises the logical Pauli group, so the induced logical action is therefore Clifford. Moreover, since $X$-type strings map to $X$-type and $Z$-type to $Z$-type, the logical action lies in the subgroup generated by CNOTs.} as a Clifford unitary on the encoded logical qubits.

Now, since the stabilisers of the Iceberg code are $X^{\otimes 2N}$ and $Z^{\otimes 2N}$ (both invariant under $S_{2N}$), its permutation automorphism group coincides with the whole of $S_{2N}$, and so it remains only to determine the induced Clifford logical action of this group on the encoded logical qubits.

To this end, we consider a convenient set of generators for $S_{2N}$. Labelling the physical qubits of the Iceberg code as $\{0,1,\ldots,2N-1\}$, we introduce the terminology \textit{left/right boundary} to denote the qubits $0$ and $2N-1$ and we call $\{1,2,\ldots,2N-2\}$ the \textit{bulk} qubits. It is clear that $S_{2N}$ is generated by bulk transposition $(i\; i+1)$ for $i\in \{1,2,\dots, 2N-3\}$, left boundary transposition $(0\;1)$ and right boundary transposition $(2N-2\;2N-1)$.

We find that bulk transposition generates a logical \textsf{SWAP} between logical qubits $i-1$ and $i$ for each $i\in\{1,\dots,2N-3\}$, left boundary transposition generates a \textsf{FAN-IN} onto logical qubit $0$, and right boundary transposition generates a \textsf{FAN-OUT} from logical qubit $2N-3$; explicitly,
\begin{equation}
\begin{aligned}
  (i\;\;i{+}1) &\;\longmapsto\; \overline{\mathrm{SWAP}}_{i-1,i},
    && i\in\{1,\dots,2N-3\}, \\[2pt]
  (0\;\;1) &\;\longmapsto\; \prod_{k=1}^{2N-3}\overline{\mathrm{CNOT}}_{k\to 0}, \\[2pt]
  (2N-2\;\;2N-1) &\;\longmapsto\; \prod_{k=0}^{2N-4}\overline{\mathrm{CNOT}}_{2N-3\to k }.
\end{aligned}
\end{equation}
where an overline indicates a logical gate.

\subsection{Orbits and seed operators}
\label{sec:orbits}
We now group logical operators into orbits under the action of $S_{2N}$. Two operators belong to the same orbit if and only if they are related by one of the permutation-induced circuits discussed in the previous subsection. Classifying these orbits is useful because constructing a single surgery gadget per orbit suffices to measure every logical operator, as we show in the following. This approach was recently elucidated for general LDPC codes by Webster, Smith and Cohen \cite{webster2025explicit}.

To begin, we note that each logical operator $\bar{P}$ has a physical representative $P$, to which we associate a tuple
\begin{equation}
    t(P)=(n_I,n_X,n_Y,n_Z),
\end{equation}
counting how many of the $2N$ sites carry each Pauli type.\footnote{For example, the logical operator $\bar{P}=\bar{Y}^{(0)}$ has physical representative $P = XYIIIZ$, corresponding to $t(P) = (3,1,1,1)$.} Valid logical operators of the Iceberg code satisfy $\sum_\alpha n_\alpha = 2N$, $n_X+n_Y\equiv n_Z+n_Y\equiv 0\pmod{2}$, and $n_\alpha < 2N$ for every $\alpha \in \{I,X,Y,Z\}$.\footnote{A Pauli string $P$ commutes with $s_X=X^{\otimes 2N}$ if it anticommutes on an even number of sites, giving $n_Z+n_Y\equiv 0\pmod{2}$, and similarly commuting with $s_Z=Z^{\otimes 2N}$ gives $n_X+n_Y\equiv 0\pmod{2}$. The condition $n_\alpha<2N$ for every $\alpha$ ensures $P$ is not itself a stabiliser.} Since a permutation only rearranges letters without changing their counts, it fixes $t(P)$, and $S_{2N}$ acts transitively on all strings sharing a given tuple. The remaining freedom is then multiplication by elements of the stabiliser group $\{I,s_X,s_Z,s_Xs_Z\}$, which acts on tuples by
\begin{equation}\label{eq:stabilizer_group_action}
\begin{aligned}
  s_X&:\,(n_I,n_X,n_Y,n_Z)\mapsto(n_X,n_I,n_Z,n_Y),\\
  s_Z&:\,(n_I,n_X,n_Y,n_Z)\mapsto(n_Z,n_Y,n_X,n_I),\\
  s_Xs_Z&:\,(n_I,n_X,n_Y,n_Z)\mapsto(n_Y,n_Z,n_I,n_X).
\end{aligned}
\end{equation}
This stabiliser group action partitions the set of valid tuples into equivalence classes, or orbits. As we prove in \Cref{app:orbits}, the $S_{2N}$-orbits of logical Pauli strings of the Iceberg code are in one-to-one correspondence with these equivalence classes, and the number of orbits is given by
\begin{equation}
  \#\text{orbits}
  =
  \begin{cases}
    \dfrac{N^3+3N^2+5N-9}{12}, & N \text{ odd},\\[2ex]
    \dfrac{N^3+3N^2+14N}{12}, & N \text{ even}.
  \end{cases}
\end{equation}
Finally, within each orbit we define a \emph{seed operator} as the representative of minimal weight $w = 2N - n_I$ with the form
\begin{equation}
    P_{\text{seed}} = \underbrace{X\cdots X}_{n_X}\,\underbrace{Y\cdots Y}_{n_Y}\,\underbrace{Z\cdots Z}_{n_Z}\,\underbrace{I\cdots I}_{n_I}.
\end{equation}
Every operator in the orbit is then obtained from $P_{\text{seed}}$ by an element of $S_{2N}$.

\Cref{alg:seed} below summarises and formalises this idea: given any logical Pauli, it returns the seed of its orbit together with the permutation $\sigma\in S_{2N}$ carrying the operator to that seed, and we now give an example of this procedure.

\begin{example}
Consider the $[[6,4,2]]$ Iceberg code and the logical operator $\bar{P} = \bar{X}^{(0)}\bar{X}^{(1)}\bar{X}^{(2)}\bar{X}^{(3)}$.

\textit{Step 1 (physical representative).} Noting $\bar{X}^{(i)}=X_0X_{i+1}$ we have
\begin{equation*}
    P_{\mathrm{phys}} = (X_0X_1)(X_0X_2)(X_0X_3)(X_0X_4) = X_1X_2X_3X_4,
\end{equation*}
this corresponds to tuple $(2,4,0,0)$ and has weight $w=4$.

\textit{Step 2 (select the minimal weight representative).} The four representatives are
\begin{align*}
    P_{\mathrm{phys}}        &= X_1X_2X_3X_4, & t&=(2,4,0,0), & w&=4,\\
    s_X P_{\mathrm{phys}}    &= X_0X_5,       & t&=(4,2,0,0), & w&=2,\\
    s_Z P_{\mathrm{phys}}    &= Z_0Y_1Y_2Y_3Y_4Z_5, & t&=(0,0,4,2), & w&=6,\\
    s_Xs_Z P_{\mathrm{phys}} &= Y_0Z_1Z_2Z_3Z_4Y_5, & t&=(0,0,2,4), & w&=6.
\end{align*}
The minimal weight is obtained by $s_XP_{\mathrm{phys}}=X_0X_5$, so we take $P_{\mathrm{phys}}\gets X_0X_5$ and the tuple is $t=(4,2,0,0)$ with weight $w=2$.

\textit{Step 3 (read off the seed).} From $t=(4,2,0,0)$ the canonical seed is
\begin{equation*}
    P_{\mathrm{seed}} = \underbrace{X\,X}_{n_X=2}\,\underbrace{I\,I\,I\,I}_{n_I=4}
    = X_0X_1.
\end{equation*}

\textit{Step 4 (find the permutation).} Group the qubit indices by the Pauli type carried at each site of $P_{\mathrm{phys}}=X_0X_5$:
\begin{equation*}
    I_X = \{0,5\},\quad I_Y=I_Z=\emptyset,\quad I_I=\{1,2,3,4\}.
\end{equation*}
Listing these groups in the order $X,Y,Z,I$ gives the target ordering of sites; the permutation $\sigma$ taking the sites of $P_{\mathrm{phys}}$ to this order fixes $0$, sends $5\mapsto 1$, and relabels $\{1,2,3,4\}\mapsto\{2,3,4,5\}$. By construction $\sigma(X_0X_5)=X_0X_1=P_{\mathrm{seed}}$.

Thus the all-$X$ logical lies in the $XX$ orbit and admits a weight-$2$ representative. Since $\sigma$ is a relabelling, it need not be applied physically. Rather than permuting the data qubits so that $P_{\mathrm{phys}}$ matches the seed layout, one addresses the seed gadget to the permuted qubits, coupling each gauge qubit and check to the inverse permutation $\sigma^{-1}$ of its original target. On hardware with reconfigurable connectivity this is a change of addressing. The stabilisers $s_X$ and $s_Z$ are themselves permutation invariant, so the code and its syndrome data are unchanged.
\end{example}

\begin{remark}
The seed reduction \cref{alg:seed} runs in linear time in the block size. Computing the physical representative, minimising over the four stabiliser coset representatives, and counting the Pauli types each require only a single pass over the $2N$ sites. The final permutation is obtained by sorting the sites into four buckets, so the full algorithm runs in $O(N)$ time.
\end{remark}

\begin{remark}
The number of $S_{2N}$-orbits grows as $O(N^3)$, compared to the $4^{2N-2}$ logical Paulis. Hence, the number of gadgets that must be prepared and calibrated for a computation is polynomial in the block size, and any logical Pauli measurement can be compiled to one of these gadgets via the $O(N)$ seed reduction of \cref{alg:seed} and a free relabelling.
\end{remark}

\begin{algorithm}[H]
\caption{Seed reduction for the Iceberg code}
\label{alg:seed}
\begin{algorithmic}[1]
\Require A logical Pauli string $\bar{P}$
\Ensure Seed $P_{\mathrm{seed}}$ and permutation $\sigma\in S_{2N}$
  \State \textbf{Compute the physical representative.} Determine the physical
    Pauli string $P_{\mathrm{phys}}$ on $2N$ qubits whose logical action is $\bar{P}$.
\State \textbf{Minimise weight over the stabiliser coset.} Among the four
  elements $\{P_{\mathrm{phys}},\,s_XP_{\mathrm{phys}},\,s_ZP_{\mathrm{phys}},\,s_Xs_ZP_{\mathrm{phys}}\}$,
  replace $P_{\mathrm{phys}}$ by the one with minimal weight; if more than one
  achieves the minimum, then take the seed with the smallest $n_Y$.
  \State \textbf{Read off the seed.} From $t(P_{\mathrm{phys}})=(n_I,n_X,n_Y,n_Z)$,
    construct the seed string
    \begin{equation*}
        P_{\mathrm{seed}} = \underbrace{X\cdots X}_{n_X}\,\underbrace{Y\cdots Y}_{n_Y}\,\underbrace{Z\cdots Z}_{n_Z}\,\underbrace{I\cdots I}_{n_I}.
    \end{equation*}
\State \textbf{Find the permutation.} Partition the $2N$ qubit indices into four buckets $B_X,B_Y,B_Z,B_I$ according to the Pauli type carried by $P_{\mathrm{phys}}$ at each site, so that $B_P = \{\, i : (P_{\mathrm{phys}})_i = P \,\}$ for each $P\in\{X,Y,Z,I\}$. Concatenate these buckets in the order $X,Y,Z,I$ into a single ordered list $L = (L_0, L_1, \ldots, L_{2N-1}) = B_X \| B_Y \| B_Z \| B_I$, so that $L_j$ is the $j$-th site in this list. Define $\sigma$ by $\sigma(L_j) = j$ for each $j$ --- i.e.\ the site occupying list-position $j$ is sent to position $j$. This satisfies $\sigma(P_{\mathrm{phys}}) = P_{\mathrm{seed}}$, since $L$ lists the sites in the same type order as the
seed string.
  \State \textbf{Output.} Return $P_{\mathrm{seed}}$ and $\sigma$
\end{algorithmic}
\end{algorithm}

\section{Surgery}
\label{sec:gadgets}

We now construct the surgery gadgets for the Iceberg codes. Let us first recall the protocol from the warm-up example of \cref{sec:warmup}. To measure a logical Pauli $\bar P$, we add gauge qubits and Gauss-law checks---the gadget---deforming the code into a merged code in which $\bar P$ is a product of stabilisers. Measuring the merged code checks yields the eigenvalue of $\bar P$ as the product of Gauss-law outcomes, and a split returns the data to the original code. The warm up treated a single weight two operator of the $[[4,2,2]]$ code and here we provide a more general treatment. We give the gauging recipe for an arbitrary seed operator (\cref{sec:gauging-general}), explicit gadgets for every seed of the $[[4,2,2]]$ and $[[6,4,2]]$ codes, the merge--split measurement protocol, and the compilation of an arbitrary logical Pauli via the orbit classification of \cref{sec:orbits}. Circuit level numerical verification of the fault-detecting property is then presented in \cref{sec:ft}.

\subsection{Gadget construction}
\label{sec:gauging-general}

We gauge an arbitrary seed operator of the $[[2N,2N-2,2]]$ Iceberg code. Let $P = \prod_{v\in V} P_v$ be a seed of weight $w = n_X+n_Y+n_Z$, with $P_v\in\{X,Y,Z\}$ supported on the first $w$ data qubits $V = \{0,1,\dots,w-1\}$. Choose a connected auxiliary graph $\Gamma = (V,E)$ on the support, place one gauge qubit $\tau_e$ on each edge $e\in E$, and define the Gauss-law checks by
\begin{equation}
  G_v = P_v \prod_{e\ni v}\tau_e^X, \qquad v\in V.
\end{equation}
These operators mutually commute, and since every edge meets exactly two vertices the gauge factors cancel in pairs in the product $\prod_{v\in V} G_v = P$ so that the product of the Gauss-law outcomes is the measured eigenvalue of $P$.

The two Iceberg stabilisers $s_X = X^{\otimes 2N}$ and $s_Z = Z^{\otimes 2N}$ must now be dressed by Wilson lines\footnote{In the gauge theory language, the Gauss laws $G_v$ generate local $\mathbb{Z}_2$ transformations and the bare stabiliser $s_X$ becomes charged under the gauge symmetry --- conjugation by $G_v$ produces a sign at each vertex of $C_X$. A product $\prod_{e\in D}\tau^Z_e$ along a set of edges is an open Wilson line, itself charged precisely at its endpoints $\partial D$. Attaching a Wilson line whose endpoints coincide with the charged sites, $\partial D_X = C_X$, cancels the charge and ensures that the dressed operator $W_X$ is gauge invariant. The dressings are thus open Wilson lines terminating on the charges.} to commute with the Gauss laws. Since $s_X$ acts on every qubit, it anticommutes with $G_v$ exactly at those $v$ where $P_v \in \{Y,Z\}$; call this set $C_X\subseteq V$. Similarly $s_Z$ anticommutes with $G_v$ exactly where $P_v\in\{X,Y\}$, giving $C_Z$. We thus look for edge sets $D_X\subseteq E$ and $D_Z\subseteq E$ such that
\begin{equation}\label{eq:dressedstab}
  W_X = s_X\prod_{e\in D_X}\tau_e^Z, \qquad
  W_Z = s_Z\prod_{e\in D_Z}\tau_e^Z
\end{equation}
commute with every $G_v$. The matter factors contribute a sign at $v$ exactly when $v\in C_X$, the gauge factors exactly when an odd number of edges of $D_X$ meet $v$, so writing $\partial: \mathbb{F}_2^E\to\mathbb{F}_2^V$ for the boundary map of $\Gamma$, the dressing condition\footnote{Solutions of $\partial D_X = C_X$ form a coset of the cycle space, so any two valid dressings differ by a product of $\tau^Z_e$ around cycles of $\Gamma$---flux checks of the merged code, defined below---and the merged code is independent of the choice.} is $\partial D_X = C_X$ and $\partial D_Z = C_Z$.

\begin{remark}
We note that such dressings always exist for the Iceberg code. Since every edge has two endpoints, the image of $\partial$ lies in the even-weight subspace of $\mathbb{F}_2^V$; the kernel of $\partial$ is the cycle space of $\Gamma$, of dimension $|E| - w + 1$ for $\Gamma$ connected, so the image is exactly the even-weight subspace, of dimension $w-1$. Now $|C_X| = n_Y + n_Z$ and $|C_Z| = n_X + n_Y$, and both are even. Indeed, these are precisely the parity constraints of \cref{sec:orbits} that make $P$ a logical operator of the Iceberg code. 
\end{remark}

In the case that $\Gamma$ contains cycles, each independent cycle $c$ contributes a flux check
\begin{equation}
  B_c = \prod_{e\in c}\tau_e^Z,
\end{equation}
which commutes with every Gauss law since a cycle meets each vertex in an even number of edges, and the merged code is finally defined as
\begin{equation}
  \mathcal{S}_{\mathrm{merge}} = \bigl\langle\, W_X,\; W_Z,\; \{G_v\}_{v\in V},\; \{B_c\}_{c\in\mathcal{C}}\,\bigr\rangle,
\end{equation}
where $\mathcal{C}$ is a set of independent cycles of $\Gamma$, using $|E|$ gauge qubits in total. The construction is illustrated, for example, in \cref{fig:loop_check}.

It remains to choose $\Gamma$ such that the code distance (in this case $d=2$) is preserved in the merged code. The relevant quantity is the \emph{Cheeger constant}:
\begin{equation}
  h(\Gamma) := \min_{0 < |S| \le |V|/2} \frac{|\partial S|}{|S|},
\end{equation}
where the minimum is taken over vertex subsets $S$ and $\partial S$ denotes the set of edges with exactly one endpoint in $S$. The merged code preserves the distance of the original whenever $h(\Gamma)\geq 1$ \cite{williamson2026low}. For the path graph on $w$ vertices we have $h(\Gamma) = 1/\lfloor w/2\rfloor$, so the path suffices exactly for weight $w\leq 3$ operators at the minimal cost of $w-1$ gauge qubits and no flux checks. For $w > 3$ the path must be augmented. Adding edges never decreases $h(\Gamma)$, and the complete graph has $h\geq 1$, so a distance-preserving graph always exists, with each added edge costing one gauge qubit, one flux check, and higher-weight Gauss laws at its endpoints. 

\begin{example}
We consider now the $[[4,2,2]]$ example. The orbit classification of \cref{sec:orbits} gives four seeds:
\begin{equation}
    \{\, XXII,\; YYII,\; ZZII,\; XYZI \,\},
\end{equation}
all of weight $w\leq 3$, so each is gauged on the path graph over its support. The resulting merged-code generators (dressed checks $W_X, W_Z$ and Gauss laws $G_v$) are collected in \cref{tab:gadgets4} and the Tanner graph is illustrated in \cref{fig:gadget4}. The weight two seeds use a single gauge qubit and the $XYZ$ seed uses two.
\end{example}

\begin{example}
For the $[[6,4,2]]$ code the classification of  \cref{sec:orbits} gives five seeds,
\begin{equation}
    \{\, XXIIII,\; YYIIII,\; ZZIIII,\; XYZIII,\; XXZZII \,\},
\end{equation}
the first four of weight $w\leq 3$, gauged by the path graph as before. The weight four seed $XXZZII$ is the one case requiring modification since on the bare path the merged code acquires an unwanted weight one logical operator (reflected in the fact that the Cheeger constant is $h(\Gamma)=1/2$). Closing the path with a single edge $e_3=(3,0)$ into a four cycle restores $h(\Gamma)=1$. The cycle contributes the flux check $B = \tau_{e_0}^Z\tau_{e_1}^Z\tau_{e_2}^Z\tau_{e_3}^Z$ and one additional gauge qubit. The stabilisers of all five gadgets are collected in \cref{tab:gadgets6}. Many physical layouts are possible, and we illustrate one alternative example for the weight four gadget in  \cref{fig:loop_check}.
\end{example}

\begin{figure}
    \centering
    \includegraphics[width=0.35\linewidth]{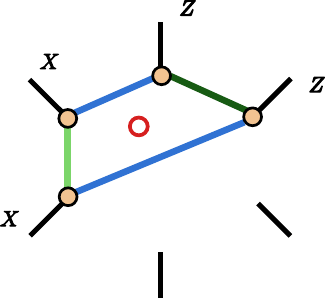}
    \caption{The auxiliary graph $\Gamma$ for the weight-four seed $XXZZII$ of the $[[6,4,2]]$ Iceberg code. The six data qubits are drawn as outward spokes of a hexagon, with the support of the seed labelled by Pauli type and the two unsupported qubits in grey. Gauge qubits live on the edges of $\Gamma$ (blue and green): the path along the hexagon is closed into a four-cycle by a chord. Gauss-law checks $G_v$ sit at the pink vertices and the flux check $B$ on the enclosed face (red). The dressings of the two global checks are marked in green: $W_Z = s_Z\,\tau^Z_{e_0}$ (light green) and $W_X = s_X\,\tau^Z_{e_2}$ (dark green).}
    \label{fig:loop_check}
\end{figure}

\begin{table}[H]
\centering
\renewcommand{\arraystretch}{1.4}
\begin{tabular}{cccccccc}
\hline\hline
$(n_I,n_X,n_Y,n_Z)$ & $P$ & $W_X$ & $W_Z$
    & $G_{v_0}$ & $G_{v_1}$ & $G_{v_2}$ & $n_g$ \\
\hline
$(2,2,0,0)$
    & $X^{\otimes 2}I^{\otimes 2}$
    & $X^{\otimes 4}$
    & $Z^{\otimes 4}\tau_{e_0}^Z$
    & $X_0\,\tau_{e_0}^X$ & $X_1\,\tau_{e_0}^X$ & & 1\\[4pt]
$(2,0,2,0)$
    & $Y^{\otimes 2}I^{\otimes 2}$
    & $X^{\otimes 4}\tau_{e_0}^Z$
    & $Z^{\otimes 4}\tau_{e_0}^Z$
    & $Y_0\,\tau_{e_0}^X$ & $Y_1\,\tau_{e_0}^X$ & & 1\\[4pt]
$(2,0,0,2)$
    & $Z^{\otimes 2}I^{\otimes 2}$
    & $X^{\otimes 4}\tau_{e_0}^Z$
    & $Z^{\otimes 4}$
    & $Z_0\,\tau_{e_0}^X$ & $Z_1\,\tau_{e_0}^X$ & & 1\\[4pt]
$(1,1,1,1)$
    & $XYZI$
    & $X^{\otimes 4}\tau_{e_1}^Z$
    & $Z^{\otimes 4}\tau_{e_0}^Z$
    & $X_0\,\tau_{e_0}^X$
    & $Y_1\,\tau_{e_0}^X\tau_{e_1}^X$
    & $Z_2\,\tau_{e_1}^X$ & 2\\
\hline\hline
\end{tabular}
\caption{Stabiliser generators $\langle W_X,\,W_Z,\,G_{v_0}\,G_{v_1}\,G_{v_2}\rangle$ of the
merged code for each seed $P$ of the $[[4,2,2]]$ Iceberg code. The last column $n_g$
gives the number of gauge qubits introduced.}
\label{tab:gadgets4}
\end{table}

\begin{table}[H]
\centering
\footnotesize
\setlength\tabcolsep{3.5pt}
\renewcommand{\arraystretch}{1.25}
\begin{tabular}{cccccccccc}
\hline\hline
$(n_I,n_X,n_Y,n_Z)$ & $P$ & $W_X$ & $W_Z$ & $G_{v_0}$ & $G_{v_1}$ & $G_{v_2}$ & $G_{v_3}$ & $B$ & $n_g$ \tabularnewline
\hline
$(4,2,0,0)$ & $X^{\otimes 2}I^{\otimes 4}$ & $X^{\otimes 6}$ & $Z^{\otimes 6}\tau_0^Z$ & $X_0\tau_0^X$ & $X_1\tau_0^X$ & -- & -- & -- & $1$ \tabularnewline
$(4,0,2,0)$ & $Y^{\otimes 2}I^{\otimes 4}$ & $X^{\otimes 6}\tau_0^Z$ & $Z^{\otimes 6}\tau_0^Z$ & $Y_0\tau_0^X$ & $Y_1\tau_0^X$ & -- & -- & -- & $1$ \tabularnewline
$(4,0,0,2)$ & $Z^{\otimes 2}I^{\otimes 4}$ & $X^{\otimes 6}\tau_0^Z$ & $Z^{\otimes 6}$ & $Z_0\tau_0^X$ & $Z_1\tau_0^X$ & -- & -- & -- & $1$ \tabularnewline
$(2,2,0,2)$ & $X^{\otimes 2}Z^{\otimes 2}I^{\otimes 2}$ & $X^{\otimes 6}\tau_2^Z$ & $Z^{\otimes 6}\tau_0^Z$ & $X_0\tau_0^X\tau_3^X$ & $X_1\tau_0^X\tau_1^X$ & $Z_2\tau_1^X\tau_2^X$ & $Z_3\tau_2^X\tau_3^X$ & $\tau_0^Z\tau_1^Z\tau_2^Z\tau_3^Z$ & $4$ \tabularnewline
$(3,1,1,1)$ & $XYZI^{\otimes 3}$ & $X^{\otimes 6}\tau_1^Z$ & $Z^{\otimes 6}\tau_0^Z$ & $X_0\tau_0^X$ & $Y_1\tau_0^X\tau_1^X$ & $Z_2\tau_1^X$ & -- & -- & $2$ \tabularnewline
\hline\hline
\end{tabular}
\caption{Stabiliser generators of the merged code for the five seed representatives of the $[[6,4,2]]$ Iceberg code. The $XXZZII$ seed has a four-cycle and requires the flux check $B$.}
\label{tab:gadgets6}
\end{table}

\begin{figure}[h]
    \centering
    \includegraphics[width=0.8\textwidth]{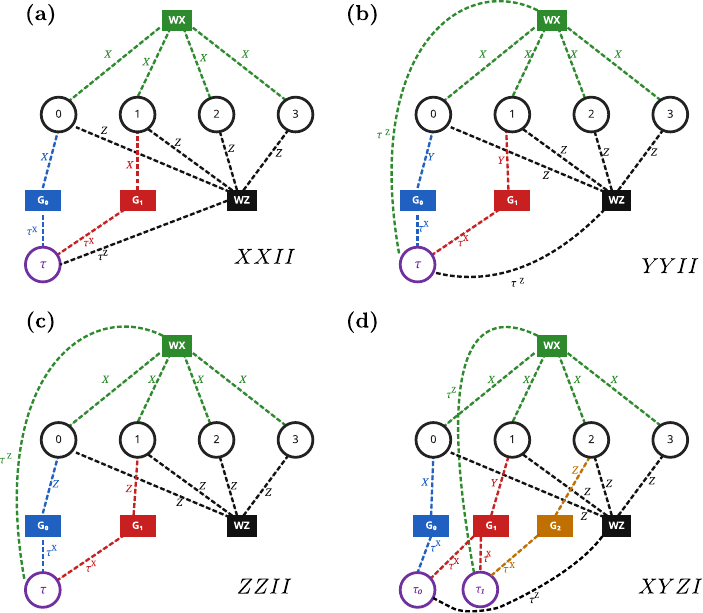}
    \caption{The panels depict the stabiliser structure of the merged $[[4,2,2]]$ Iceberg code for different seed operators. The black circles represent the four physical data qubits, and the purple circles represent the gauge qubits. The green ancilla $W_X$ measures the $X$-type stabiliser, the black ancilla $W_Z$ measures the $Z$-type stabiliser, and the colored ancillas $G_{v_0}$, $G_{v_1}$ and $G_{v_2}$ are the Gauss-law checks at
each support vertex.}
    \label{fig:gadget4}
\end{figure}

Our focus on the $[[4,2,2]]$ and $[[6,4,2]]$ codes is also experimentally motivated. Since fault detection proceeds by post-selection, the acceptance rate falls with the number of noisy circuit locations, so larger blocks and higher-weight gadgets reject an increasing fraction of runs. The small examples here keep the accepted fraction high enough to be practical.

\subsection{The measurement protocol}
We now describe the protocol that measures some seed $P$ and explain how to perform post-selected error detection. In \Cref{sec:ft} we provide numerical simulations that demonstrate the required $O(p^2)$ logical error rate of the following protocol. Additional details are discussed in \cref{app:gaugedet}.

The Iceberg code has distance two, and our surgery gadgets are fault detecting rather than fault correcting. This means that every single circuit-level fault must either act trivially or cause the run to be rejected. We are thus concerned with the post-selected logical error rate 
\begin{equation}
  P_{\mathrm{L}\mid\mathrm{acc}}(p) = \Pr \left[\,\text{logical failure}\mid\text{accepted}\,\right].
  \label{eq:plgacc}
\end{equation}

Let us assume that the Iceberg code has already been prepared in the code space state\footnote{For instance by the fault-tolerant, error-detecting initialisation of \cite{self2024protecting}.} $\ket{\psi}$ and that we have the last measurements of the two stabilisers $s_X$ and $s_Z$ of the previous rounds.

\paragraph{Initialisation.}
In the first step of the protocol we reset/introduce the gauge qubits in the $\ket{0}$ state, so that the system is in the state $\ket{\psi}\ket{0}^{\otimes n_g}$. This state is the stabiliser state of the split code 
\begin{equation}
    \mathcal{S}_{\mathrm{split}} = \langle s_X,s_Z,\{\tau^Z_e\} \rangle.
\end{equation}
Throughout this section we write $s(t)\in\{0,1\}$ for the outcome of the $t$-th measurement of a stabiliser operator $s$, with the convention $0\leftrightarrow+1$ and $1\leftrightarrow-1$ for the corresponding eigenvalue. Comparison detector outcomes are computed as the mod $2$ sum ($\oplus$) of two such measurement outcomes, and a detector signals a fault whenever it evaluates to $1$, i.e. whenever the two measurements it compares disagree.

We prepare $\ket{\psi}\ket{0}^{\otimes n_g}$ fault-tolerantly with two rounds of split-code syndrome extraction, defining the detectors
\begin{gather}
    D_{0}=s_{X}(0)\oplus s_{X}(-1),\quad D_{1}=s_{Z}(0)\oplus s_{Z}(-1),\quad D_{2}^{(e)}=\tau_{e}^Z(0),\\
    D_{3}=s_{X}(1)\oplus s_{X}(0),\quad D_{4}=s_{Z}(1)\oplus s_{Z}(0),\quad D_{5}^{(e)}=\tau_{e}^Z(1)\oplus \tau_{e}^Z(0),
\end{gather}
where $t=-1$ labels the last measurement recorded before this protocol began and $t=0,1$ label the two rounds of this step. In the first round, $D_0,D_1$ check $s_X,s_Z$ against their previously recorded values, while $D_2^{(e)}$ checks that each gauge qubit was reset successfully, i.e. measured as $+1$; unlike $s_X,s_Z$, the gauge qubits have no prior round to compare against, so this detector instead checks the expected fixed value directly. In the second round, $D_3,D_4,D_5^{(e)}$ compare all outcomes against the first round. Together these detectors flag any fault occurring during the preparation of the split code state.

\paragraph{Merging.} 
Having prepared the split code state, we now measure the merged code stabilisers
\begin{equation}
    \mathcal{S}_{\mathrm{merge}} = \langle W_X, W_Z, \{B_c\}, \{G_v\}\rangle
\end{equation}
for two rounds, continuing the round labels at $t=2,3$. This step is where the measurement outcome $\mu$ is extracted from the $G_v$ outcomes.

The merged code stabilisers $W_X$ and $W_Z$ are given by $W_X = s_X\prod_{e\in D_X}\tau_e^Z$ and $W_Z=s_Z\prod_{e\in D_Z}\tau_e^Z$, for fixed edge sets $D_X$ and $D_Z$ determined by the gadget. Their first round outcome is therefore predictable from the already known $s_X,s_Z$ together with the gauge qubit values measured at the end of the split code stage (round $1$), giving the consistency detectors
\begin{equation}
    D_{6} = W_{X}(2)\oplus s_{X}(1)\oplus\bigoplus_{e\in D_{X}}\tau_{e}^Z(1), \qquad
    D_{7} = W_{Z}(2)\oplus s_{Z}(1)\oplus\bigoplus_{e\in D_{Z}}\tau_{e}^Z(1).
\end{equation}
Similarly, the plaquette $B_c=\prod_{e\in c}\tau_{e}^Z$ is built entirely out of gauge qubits, so its first round outcome should match the corresponding combination of split code gauge qubit outcomes,
\begin{equation}
    D_{8}^{(c)} = B_{c}(2)\oplus\bigoplus_{e\in c}\tau_{e}^Z(1), \qquad \text{one detector per cycle } c.
\end{equation}
$G_v$ was not part of the split-code stabiliser group, so there is nothing from the previous stage to compare it against. Its first measurement outcome $g_v(2)$ is non-deterministic ($\langle G_v\rangle=0$ on the split-code state), and no fault detecting comparison detector can be defined for it yet.

In the second round, we repeat the measurement of every merged code stabiliser, comparing round $3$ against round $2$,
\begin{equation}
    D_{9} = W_{X}(3)\oplus W_{X}(2),\qquad
    D_{10} = W_{Z}(3)\oplus W_{Z}(2),\qquad
    D_{11}^{(c)} = B_{c}(3)\oplus B_{c}(2).
\end{equation}
For $G_v$ this is the first round in which a detector is possible:
\begin{equation}
    D_{12}^{(v)} = G_{v}(3)\oplus G_{v}(2).
\end{equation}
If none of these detectors fire, the round $2$ outcomes are correct, and we record
\begin{equation}
    \mu = \prod_{v} g_{v}(2),
\end{equation}
which is the measurement outcome of the logical operator $P$\footnote{Here $g_v = \pm 1$.}. The system is left in an entangled state.

\paragraph{Splitting.}
The final step disentangles the gauge qubits from the matter by measuring them directly. To protect against a measurement error we measure every gauge qubit twice, continuing the round labels at $t=4,5$, and define the detectors
\begin{equation}
    D_{13}^{(e)} = \tau_{e}^Z(5)\oplus \tau_{e}^Z(4).
\end{equation}
If none of these fire, we record the trusted outcome $w_e = \pm1$ from round $4$ for every edge $e$, and set
\begin{equation}
    m_e = \frac{1-w_e}{2} \in\{0,1\}.
\end{equation}
The bit string $m=(m_e)_{e\in E}$ must lie in the cut space of the graph for the inversion below to be possible. This is not implied by the detectors defined so far since the $m_e$ are fresh measurements, and nothing yet relates them to the flux checks. Since $B_c$ is built out of gauge qubits, its last merged code outcome must agree with the product of the split outcomes around that cycle, which we impose as the detector
\begin{equation}
    D_{14}^{(c)} = B_{c}(3)\oplus\bigoplus_{e\in c} m_e, \qquad \text{one detector per cycle } c.
    \label{eq:fluxsplit}
\end{equation}
This is precisely the statement that $m=\delta(A)$ for some $A\subseteq V$, where $\delta$ is the coboundary (cut) map sending a vertex set $A$ to the set of edges with exactly one endpoint in $A$.

We recover $A$ (up to an $A \to A^{c}$ ambiguity) by inverting $\delta$. We fix a spanning tree $T$ and a root $r$ of $G$, and compute the potential $\varphi_v = \bigoplus_{e\in\Pi_v} m_e$ along the unique tree path $\Pi_v$ from $r$ to each vertex $v$, and set $A=\{v : \varphi_v=1\}$. The byproduct Pauli correction $P_A^{-1}=P_A$, with
\begin{equation}
    P_A = \prod_{v\in A} P_v,
\end{equation}
is then applied as a classical frame update, together with $\mu$ (already known from the previous step) to resolve the residual $A\leftrightarrow A^c$ ambiguity. This returns the matter to the state $\ket{\psi_\mu}$, completing the measurement of $P$.

\paragraph{Final round.}
Finally, we perform two more rounds of split code syndrome extraction, measuring only $s_X$ and $s_Z$ (the gauge qubits have already been consumed by the previous step), at rounds $t=6,7$. This re-establishes the Iceberg code stabilisers.

In the first round, since $W_X = s_X\prod_{e\in D_X}\tau_e^Z$ and $W_Z = s_Z\prod_{e\in D_Z}\tau_e^Z$, the newly measured $s_X,s_Z$ should be consistent with the last merged code outcome $W_X(3),W_Z(3)$ together with the gauge qubit outcomes $m_e$ recorded during the splitting step, giving the consistency detectors
\begin{equation}
    D_{15} = s_{X}(6)\oplus W_{X}(3)\oplus\bigoplus_{e\in D_{X}} m_e, \qquad
    D_{16} = s_{Z}(6)\oplus W_{Z}(3)\oplus\bigoplus_{e\in D_{Z}} m_e.
\end{equation}
In the second round we compare against the first,
\begin{equation}
    D_{17} = s_{X}(7)\oplus s_{X}(6), \qquad D_{18} = s_{Z}(7)\oplus s_{Z}(6).
\end{equation}

If none of these detectors fire, the protocol is accepted: the matter is in the state $\ket{\psi_\mu}$ (up to the classically-tracked correction $P_A$), and $s_X(7),s_Z(7)$ are the trusted stabiliser outcomes to be carried forward into subsequent rounds.

We note that extraction of a high weight check --- $s_X,s_Z$ in the base Iceberg code, or $W_X,W_Z$ in the merged code --- is not always fault tolerant. We therefore measure these checks with a standard flagged syndrome extraction \cite{chao2018fault,chamberland2018flag}, so an ancilla hook flips the flag and in that case the run is discarded.

\subsubsection{General logical measurement protocol}
We now state the full procedure for measuring an arbitrary logical Pauli $\bar P$, combining the seed reduction of \Cref{sec:orbits} with the gadget protocol above. Given $\bar P$:
\begin{enumerate}
  \item \textit{Reduce to a seed.} Run \Cref{alg:seed} to obtain the seed
    $P_{\mathrm{seed}}$ of the orbit of $\bar P$ and the permutation
    $\sigma\in S_{2N}$ with $\sigma(P_{\mathrm{phys}})=P_{\mathrm{seed}}$.
  \item \textit{Relabel.} Apply $\sigma$ as a relabelling of the physical qubits.
  \item \textit{Measure the seed gadget.} Run the three phase merge--split protocol for the gadget of $P_{\mathrm{seed}}$ (\Cref{tab:gadgets4,tab:gadgets6}), obtaining the outcome $\mu=\prod_v g_v$ and post-selecting on the detection checks.
  \item \textit{Undo the relabelling.} Invert $\sigma$ to return to the original
    labelling. Record $\mu$, the eigenvalue of $\bar P$.
\end{enumerate}
Since the orbit classification provides one gadget per seed and $\sigma$ is free, a
small (polynomial size in $N$) collection of gadgets --- four for the $[[4,2,2]]$, five for the
$[[6,4,2]]$ --- measures every logical Pauli of the Iceberg code.

\subsection{Numerical simulation}
\label{sec:ft}

We simulate the merge/split protocol of the previous section at circuit level using \textsc{Stim}~\cite{gidney2021stim} and confirm the expected $O(p^2)$ suppression of the post-selected logical error rate \eqref{eq:plgacc}.

\paragraph{Noise model.}

Every single qubit gate is followed by one qubit depolarising noise of strength $p$, every two qubit gate by two qubit depolarising noise of strength $p$, every reset by a bit flip with probability $p$, and every measurement reports the wrong outcome with probability $p$.

\paragraph{Observables and acceptance.}
The protocol must correctly report $\mu$, the true eigenvalue of $P$, while leaving every other logical untouched. The surviving logical algebra of the merged code has $k=2N-3$ qubits (\Cref{sec:gauging-general}), forming a $2k$-dimensional symplectic space in which $P$ is one nonzero vector. A Lagrangian completion of $P$ is a maximal commuting set of logical operators containing $P$, of size $k+1$. To certify that every residual logical direction is protected, not just $P$ itself, we run the protocol twice, each time reading out a different Lagrangian completion $L_A,L_B$ of $P$, chosen so that $L_A$ and $L_B$ together span the full symplectic space. Any undetected logical error would then have to commute with everything in both $L_A$ and $L_B$, and hence be trivial. Each run exports $k+1$ observables checking the reported eigenvalue, the preservation of the other $k$ logicals in that run's completion, and a final readout of $P$. As a direct check of the fault-detecting property, we also enumerate all single circuit-level faults and confirm that each is either detected or acts trivially on the logical state.

\paragraph{Results.}

We compute $p\in[10^{-3},7\times10^{-3}]$ at seven logarithmically spaced points for every seed gadget of the $[[4,2,2]]$ and $[[6,4,2]]$ codes, in both runs. We adjust the number of samples at each point until the relative standard error falls below $2.5\%$, requiring between $2\times10^6$ and $5.8\times10^7$ shots per point.

\begin{table}[t]
  \centering
  \begin{tabular}{llcccc}
    \toprule
    & & \multicolumn{2}{c}{run $A$} & \multicolumn{2}{c}{run $B$} \\
    \cmidrule(lr){3-4}\cmidrule(lr){5-6}
    Code & Seed & slope & $C$ & slope & $C$ \\
    \midrule
    $[[4,2,2]]$ & $XXII$   & $2.02$ & $32.4$ & $2.01$ & $33.5$ \\
                & $YYII$   & $2.01$ & $36.1$ & $2.01$ & $33.8$ \\
                & $ZZII$   & $2.02$ & $32.5$ & $2.03$ & $32.2$ \\
                & $XYZI$   & $2.02$ & $54.7$ & $2.00$ & $55.3$ \\
    \midrule
    $[[6,4,2]]$ & $XXIIII$ & $2.00$ & $41.5$ & $2.04$ & $39.2$ \\
                & $YYIIII$ & $2.03$ & $49.0$ & $2.02$ & $44.2$ \\
                & $ZZIIII$ & $2.01$ & $38.8$ & $2.00$ & $41.8$ \\
                & $XYZIII$ & $2.02$ & $63.5$ & $2.01$ & $64.3$ \\
                & $XXZZII$ & $2.04$ & $80.4$ & $2.01$ & $74.7$ \\
    \bottomrule
  \end{tabular}
  \caption{Fitted log--log slopes and coefficients
    ($P_{\mathrm{L}\mid\mathrm{acc}}\simeq Cp^2$) for every seed gadget of the two codes.}
  \label{tab:pl}
\end{table}

\begin{figure}[t]
  \centering
  \includegraphics[width=\textwidth]{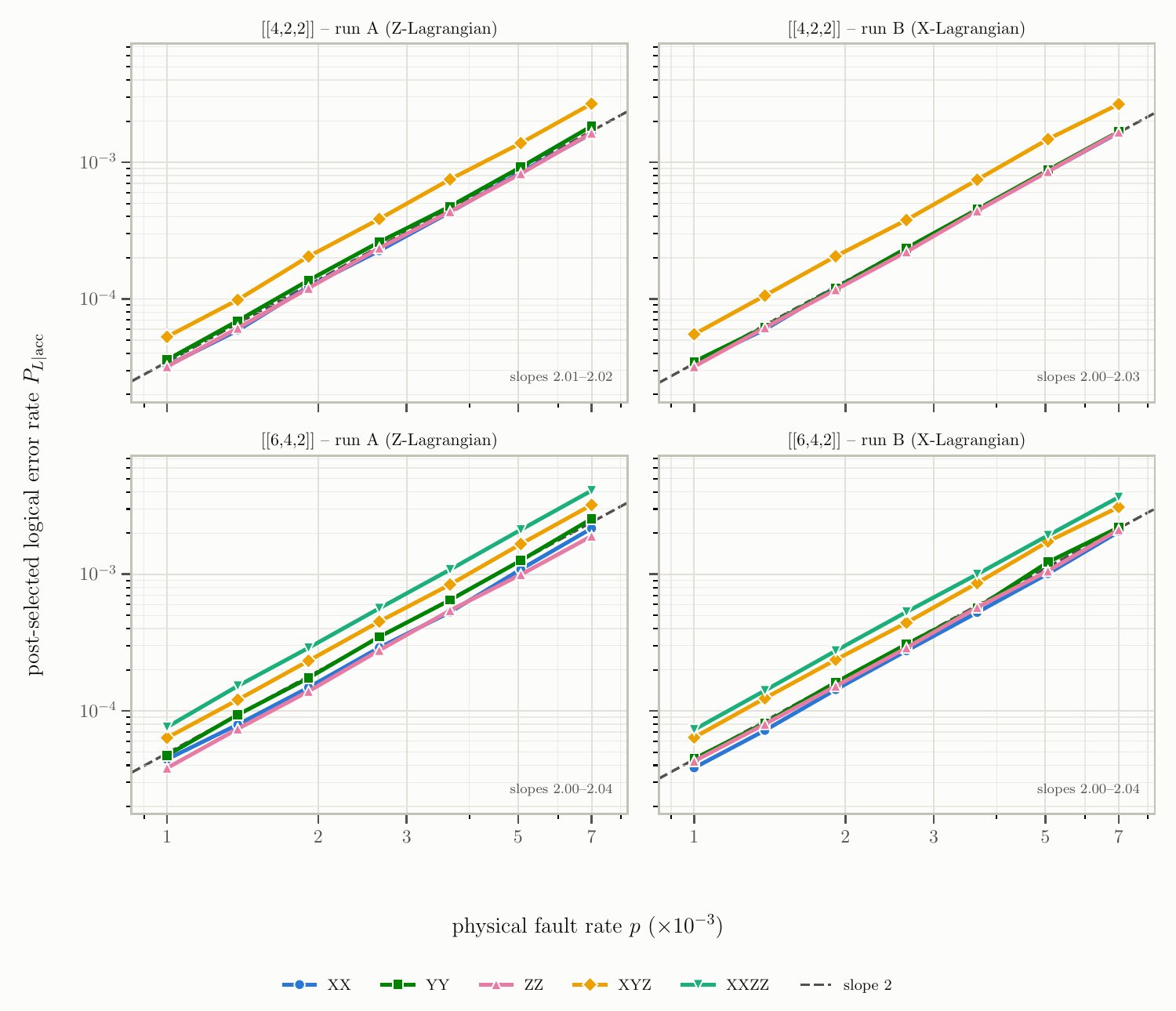}
  \caption{Post-selected logical error rate for the $[[4,2,2]]$ (top) and $[[6,4,2]]$ (bottom) seed gadgets, run $A$ (left) and run $B$ (right). Error bars are smaller than the marker size.}
  \label{fig:pl}
\end{figure}

Fitted slopes lie between $2.00$ and $2.04$ across all eighteen curves (\Cref{tab:pl}), confirming the expected quadratic suppression, with the mixed seed $XYZ$ sitting roughly $1.5$--$1.6$ times higher than the pure-type seeds, consistent with its larger gadget---two gauge qubits instead of one, and correspondingly more fault locations. The same ordering by gadget size continues in the $[[6,4,2]]$ code, where the weight four $XXZZII$ seed, with four gauge qubits and a flux check, is the most expensive.

The discard fraction grows from $11$--$13\%$ at $p=10^{-3}$ to $55$--$62\%$ at $p=7\times10^{-3}$ for the $[[4,2,2]]$ gadgets, and from $12$--$17\%$ to $60$--$73\%$ for the $[[6,4,2]]$ ones. This is the post-selection cost, explains our focus on the small Iceberg examples discussed here. These acceptance rates are shown in \Cref{fig:accept}.

\begin{figure}[t]
  \centering
  \includegraphics[width=\textwidth]{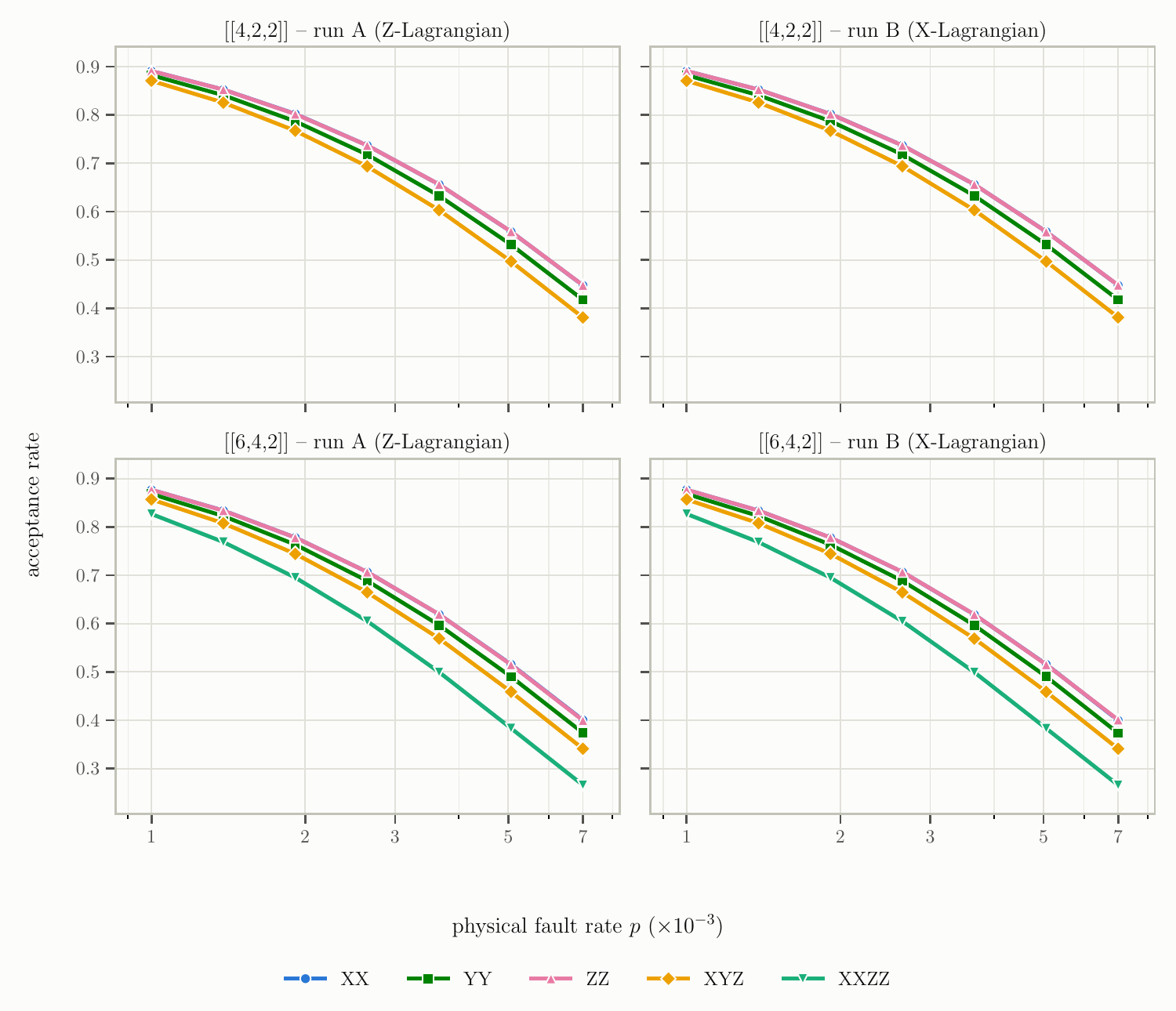}
  \caption{Acceptance rate for the $[[4,2,2]]$ (top) and $[[6,4,2]]$ (bottom) seed gadgets, run $A$ (left) and run $B$ (right). The pure type seeds $XX,YY,ZZ$ are accepted most often, followed by $XYZ$ and, in the $[[6,4,2]]$ code, the weight-four $XXZZ$ seed. Error bars are smaller than the marker size.}
  \label{fig:accept}
\end{figure}

\subsection{Examples}
We conclude the section with a number of additional examples and remarks. 

\begin{example}
We look back to Pauli-based computation of \Cref{sec:pbc-step} and realise the four-qubit GHZ sampling circuit on a single $[[6,4,2]]$ block as an example. Recall
that the circuit compiles to the Pauli product measurements
\begin{equation}
  \mathcal{P}_1=\bar{X}_0,\quad
  \mathcal{P}_2=\bar{X}_0\bar{Z}_1,\quad
  \mathcal{P}_3=\bar{X}_0\bar{Z}_1\bar{Z}_2,\quad
  \mathcal{P}_4=\bar{X}_0\bar{Z}_1\bar{Z}_2\bar{Z}_3,
\end{equation}
to be measured in order, with outcomes $m_1,\dots,m_4$ giving the GHZ samples up to Pauli frame.

\emph{Seeds and gadgets.} Using $\bar{X}^{(i)}=X_0X_{i+1}$ and
$\bar{Z}^{(i)}=Z_{i+1}Z_5$, the physical representatives are
\begin{equation}
  P_1=X_0X_1,\quad
  P_2=X_0X_1Z_2Z_5,\quad
  P_3=X_0X_1Z_2Z_3,\quad
  P_4=X_0X_1Z_2Z_3Z_4Z_5,
\end{equation}
Computing \Cref{alg:seed} on each gives
\begin{align}
  P_1 &\;\longmapsto\; P_1^{\mathrm{seed}}=X_0X_1, & \sigma_1&=\mathrm{id}, \\
P_2 &\;\longmapsto\; P_2^{\mathrm{seed}}=X_0X_1Z_2Z_3, & \sigma_2&=(3\;4\;5), \\
  P_3 &\;\longmapsto\; P_3^{\mathrm{seed}}=X_0X_1Z_2Z_3, & \sigma_3&=\mathrm{id}, \\
  P_4 &\;\longmapsto\; P_4^{\mathrm{seed}}=Y_0Y_1, & \sigma_4&=\mathrm{id}.
\end{align}
The measurement circuit therefore uses only three of the five $[[6,4,2]]$ gadgets: $XX$, $XXZZ$, and $YY$.

\emph{Execution.} Each $\mathcal{P}_i$ is measured by its gadget via the protocol above, giving $m_i\in\{0,1\}$ from the Gauss-law product $\mu_i$. As in \Cref{sec:pbc-step}, the residual Clifford is tracked classically as a Pauli frame and applied to the recorded outcomes. The whole computation runs on one $[[6,4,2]]$ block plus at most four gauge qubits (the $XXZZ$ gadget being the largest).
\end{example}

\begin{remark}[Surgery as gauge fixing of subsystem code]
The protocol has an interpretation as changing gauge of a subsystem code, as elucidated in \cite{vuillot2019code} for the surface code example. As an example, consider the $XXIIII$ gadget. Write $P=X_0X_1$, $\tau^\alpha=\tau_{e_0}^\alpha$, $W_X=s_X$, $W_Z=s_Z\tau^Z$, $G_0=X_0\tau^X$, and $G_1=X_1\tau^X$. These operators generate the parent subsystem code with gauge group
\begin{equation}
  \mathcal{G}_{\mathrm{par}}=\langle W_X,W_Z,G_0,G_1,\tau^Z\rangle,
\end{equation}
whose centre is
\begin{equation}
  Z(\mathcal{G}_{\mathrm{par}})=\langle W_X,W_Z,P\rangle,\qquad P=G_0G_1.
\end{equation}
Thus the logical Pauli being measured is a central operator of the parent subsystem code.

The initial split code $\langle s_X,s_Z,\tau^Z\rangle$ contains both $P=\pm1$ sectors, with $P$ still a logical operator. If the Gauss law measurements give outcomes $g_0,g_1\in\{\pm1\}$, the merged code is
\begin{equation}
  \mathcal{S}_{\mathrm{merge}}(g_0,g_1)=\langle W_X,W_Z,g_0G_0,g_1G_1\rangle,\qquad \mu=g_0g_1.
\end{equation}
Since $(g_0G_0)(g_1G_1)=\mu P$, these measurements both select the $P=\mu$ sector and fix the gauge. The individual outcomes $g_0$ and $g_1$ specify the gauge choice, while their product $\mu$ is the logical measurement outcome.

Within the fixed-$\mu$ sector, $G_1=\mu G_0$, so $G_0$ and $\tau^Z$ form a single anticommuting gauge pair, with subsystem stabiliser group
\begin{equation}
  \mathcal{S}_{\mathrm{par}}^{(\mu)}=\langle W_X,W_Z,\mu P\rangle.
\end{equation}
Measuring $\tau^Z$ during the return split, tracking the appropriate correction, and discarding the gauge qubit returns the data to
\begin{equation}
  \langle s_X,s_Z,\mu P\rangle.
\end{equation}
Thus, conditioned on $\mu$, the merged code and the returned split code are different gauge fixings of the same subsystem code.
\end{remark}

\begin{example}[Inter-block $XX$ measurement]
Let $A$ and $B$ be two $[[6,4,2]]$ Iceberg blocks. Label their physical qubits by $A_0,\ldots,A_5$ and $B_0,\ldots,B_5$, or equivalently by $0,\ldots,5$ and $6,\ldots,11$, respectively. We measure the product of the first logical $X$ operators,
\begin{equation}
  \bar X_A^{(0)}\bar X_B^{(0)}=X_{A,0}X_{A,1}X_{B,0}X_{B,1}=X_0X_1X_6X_7.
\end{equation}
Begin with the one edge $XXIIII$ gadget on each block, with gauge qubits $\tau_{e_0}$ and $\tau_{e_6}$, and introduce one additional bridge gauge qubit $\tau_b$ joining the two gadgets. On the bare path $0-1-6-7$ this graph has Cheeger constant $h(\Gamma)=1/2$. We close the path into a four cycle with a chord $c=(0,7)$, restoring $h(\Gamma)=1$. The Gauss-law generators are
\begin{equation}
  G_0=X_0\tau_{e_0}^X\tau_c^X,\qquad
  G_1=X_1\tau_{e_0}^X\tau_b^X,\qquad
  G_6=X_6\tau_b^X\tau_{e_6}^X,\qquad
  G_7=X_7\tau_{e_6}^X\tau_c^X.
\end{equation}
Their product is the desired inter-block logical Pauli,
\begin{equation}
  G_0G_1G_6G_7=X_0X_1X_6X_7=\bar X_A^{(0)}\bar X_B^{(0)}.
\end{equation}
The dressed stabilisers are simply those of the two single block gadgets,
\begin{equation}
  W_X^{A}=s_X^{A},\qquad W_Z^{A}=s_Z^{A}\tau_{e_0}^Z,\qquad W_X^{B}=s_X^{B},\qquad W_Z^{B}=s_Z^{B}\tau_{e_6}^Z.
\end{equation}
and the cycle contributes a flux check $B=\tau_{e_0}^Z\tau_b^Z\tau_{e_6}^Z\tau_c^Z$. The merged code is therefore generated by
\begin{equation}
  \mathcal{S}_{\mathrm{merge}}=\langle W_X^{A},W_Z^{A},W_X^{B},W_Z^{B},G_0,G_1,G_6,G_7,B\rangle,
\end{equation}
using four gauge qubits in total. If the Gauss-law outcomes are $g_0,g_1,g_6,g_7\in\{\pm1\}$, then the measurement outcome is
\begin{equation}
  \mu=g_0g_1g_6g_7,
\end{equation}
which is the eigenvalue of $\bar X_A^{(0)}\bar X_B^{(0)}$.

\end{example}

\section{Discussion}
\label{sec:discussion}

We have constructed explicit surgery gadgets for logical Pauli measurements in the $[[2N,2N-2,2]]$ Iceberg codes. The full permutation automorphism group of the Iceberg code reduces the compilation problem to one gadget per orbit seed. Thus we show that a logical Pauli can be measured by reducing it to a minimal-weight seed, performing the corresponding surgery gadget, and updating the qubit labels in software. This gives a toolkit for error-detected Pauli-based computation on Iceberg blocks. Our circuit-level simulations of the $[[4,2,2]]$ and $[[6,4,2]]$ seed gadgets exhibit the expected quadratic scaling of the accepted logical error probability under ciruit level Pauli noise.

The present work is not a proposal for an asymptotic fault-tolerant architecture since the Iceberg code has distance two, and error suppression is a consequence of post-selection. Its value is instead twofold. Firstly, we provide a small setting in which the complete gauging-based surgery primitive is small enough to be a natural target for a present-day hardware demonstration. Secondly, we present a pedagogical minimal example in which the often implicit details of surgery --- the choice of auxiliary graph, dressing of stabilisers, merge--split protocol numerics, and logical compilation by automorphisms --- can
all be worked out concretely.

\section*{Acknowledgements}

SC thanks QPerfect colleagues Hugo Perrin, Asier Pi\~neiro Orioli, Tom Hartweg, Adrian Aasen, and Shannon Whitlock for many interesting discussions.
Work at the University of Strasbourg was supported by the French National Research Agency under
the Investments of the Future Program projects ANR-21-ESRE-0032 (aQCess), ANR-17-EURE-0024 (QMat), and
ANR-22-CMAS-0001 France 2030 (QuanTEdu-France), ANR-23-CE30-50022-02 (SIX), and by the European Union via the Interreg Oberrhein programme in the project Quantum Valley Oberrhein (UpQuantVal).

\appendix

\section{Orbits and seed operators}\label{app:orbits}

The permutation automorphism group $S_{2N}$ acts on the logical Paulis of the Iceberg code, partitioning them into orbits. In this appendix we show that the orbits are in bijection with certain tuples and identify a low-weight representative in each orbit.

To a physical Pauli string $P$ we assign a tuple
\begin{equation}
  t(P)=(n_I,n_X,n_Y,n_Z),
\end{equation}
counting how many of the $2N$ sites carry each Pauli type. A permutation rearranges letters, so $t$ is a permutation invariant. Conversely $S_{2N}$ acts transitively on the set of all strings with a fixed tuple.

A tuple $t$ is realised by a logical Pauli iff its strings lie in $N(\mathcal S)\setminus\mathcal S$. Commutation with $s_X=X^{\otimes 2N}$ and $s_Z=Z^{\otimes 2N}$ requires an even overlap in each case,
\begin{equation}\label{eq:parity}
  n_X+n_Y\equiv 0,\qquad n_Z+n_Y\equiv 0 \pmod 2,
\end{equation}
together with the sum constraint
\begin{equation}\label{eq:sum}
  n_I+n_X+n_Y+n_Z=2N .
\end{equation}
Excluding $\mathcal S$ itself means excluding the tuples
\begin{equation}\label{eq:pure}
  (2N,0,0,0),\quad(0,2N,0,0),\quad(0,0,2N,0),\quad(0,0,0,2N),
\end{equation}
which are $I,\,s_X,\,s_Xs_Z,\,s_Z$. We write $\mathcal T$ for the set of tuples satisfying these constraints.

Multiplying a string by a stabiliser yields another string with the same logical action but a permuted tuple:
\begin{equation}\label{eq:V4action}
  s_X:\,t\mapsto(n_X,n_I,n_Z,n_Y),\quad
  s_Z:\,t\mapsto(n_Z,n_Y,n_X,n_I),\quad
  s_Xs_Z:\,t\mapsto(n_Y,n_Z,n_I,n_X).
\end{equation}
Writing $V_4 = \{I, s_X, s_Z, s_Xs_Z\}$ for the stabiliser group, these transformations define an action of $V_4$ on composition tuples. We denote the orbit associated to the tuple $t$ by $[t]_{V_4}$. By the orbit-stabiliser theorem we have,
\begin{equation}
  \bigl\lvert[t]_{V_4}\bigr\rvert=\frac{4}{\lvert\mathrm{Stab}(t)\rvert}.
\end{equation}
Transformations \eqref{eq:V4action} leave $t$ invariant, respectively, when one of the following conditions is satisfied.
\begin{equation}\label{eq:fixconds}
  s_X:\ n_I=n_X,\ n_Y=n_Z;\qquad
  s_Z:\ n_I=n_Z,\ n_X=n_Y;\qquad
  s_Xs_Z:\ n_I=n_Y,\ n_X=n_Z .
\end{equation}
Since if any two of \eqref{eq:fixconds} hold, the third follows, we have that $\lvert\mathrm{Stab}(t)\rvert\in\{1,2,4\}$ and correspondingly $\lvert[t]_{V_4}\rvert\in\{4,2,1\}$ respectively when no condition is satisfied, when one is satisfied, and when they are all satisfied.

\begin{proposition}\label{thm:bijection}
The $S_{2N}$-orbits of Hermitian logical Paulis (up to global sign) are in bijection with the $V_4$-orbits of valid composition tuples, via
\begin{equation}
  \varphi:\ \mathcal L/S_{2N}\longrightarrow \mathcal T/V_4,
  \qquad \varphi\bigl([P]_{S_{2N}}\bigr)=[t(P)]_{V_4}.
\end{equation}
\end{proposition}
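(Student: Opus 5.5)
We need to make precise what the objects are. A logical Pauli $\bar P$, up to global sign, is a coset $P\mathcal S$ of physical Pauli strings in $N(\mathcal S)\setminus\mathcal S$, also taken up to phase. The group $S_{2N}$ acts on such cosets by $\sigma\cdot(P\mathcal S)=(\sigma P\sigma^{-1})\mathcal S$. This is well defined because every element of $\mathcal S$ is permutation invariant, so conjugation by $\sigma$ fixes $\mathcal S$ and preserves $N(\mathcal S)$.

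The map $\varphi$ sends the coset $P\mathcal S$ to $[t(P)]_{V_4}$. The plan is to check four things in turn: that $\varphi$ is well defined, that it is surjective, and that it is injective, after first verifying that $t(P)\in\mathcal T$.

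\textbf{The tuple lies in $\mathcal T$.} Membership $P\in N(\mathcal S)$ gives the parity conditions \eqref{eq:parity}. Membership $P\notin\mathcal S$ excludes the pure tuples \eqref{eq:pure}, because a string whose tuple is one of those must equal $I$, $s_X$, $s_Xs_Z$ or $s_Z$ up to phase. Conversely, every tuple in $\mathcal T$ is realised: the string with letters in the order $X\cdots XY\cdots YZ\cdots ZI\cdots I$ is then a logical Pauli.

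\textbf{Well-definedness.} Changing the coset representative multiplies $P$ by an element of $V_4$. By \eqref{eq:V4action} this replaces $t(P)$ by its image under the corresponding element of $V_4$, so the class $[t(P)]_{V_4}$ is unchanged. I would check \eqref{eq:V4action} letterwise: for example $X\cdot\{I,X,Y,Z\}=\{X,I,Z,Y\}$ up to phase, and similarly for $Z$ and $XZ$. Changing the orbit representative by a permutation $\sigma$ leaves the tuple literally unchanged, since $t(\sigma P\sigma^{-1})=t(P)$. This also gives compatibility with the coset, because $\sigma(Ps)\sigma^{-1}=(\sigma P\sigma^{-1})s$.

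\textbf{Surjectivity.} This follows from the realisation step above, since every tuple in $\mathcal T$ comes from some logical Pauli.

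\textbf{Injectivity.} This is the main step. Suppose $[t(P)]_{V_4}=[t(Q)]_{V_4}$. Then there is $g\in V_4$ with $t(Q)=g\cdot t(P)=t(gP)$, where the second equality is again \eqref{eq:V4action}. The strings $gP$ and $Q$ have the same tuple. By transitivity of $S_{2N}$ on strings with a fixed tuple, there is $\sigma$ with $\sigma(gP)\sigma^{-1}=Q$ up to phase. Since $g$ is permutation invariant, the left-hand side equals $g\,\sigma P\sigma^{-1}$, so the coset $(\sigma P\sigma^{-1})\mathcal S$ equals $Q\mathcal S$. Hence $\bar Q=\sigma\cdot\bar P$ up to sign, and the two orbits coincide.

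The main obstacle is bookkeeping rather than depth. First, phases and signs must be handled correctly: working up to global sign with Hermitian representatives absorbs the factors of $\pm i$ produced in products such as $XZ=-iY$. Second, the $V_4$ action on cosets must commute with the $S_{2N}$ action, which holds because the stabilisers are fixed by every permutation. I will state these phase conventions explicitly at the outset so that the injectivity argument goes through cleanly.
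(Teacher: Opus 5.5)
Your proposal is correct and follows essentially the same route as the paper. Well-definedness holds because permutations preserve tuples and stabiliser multiplication acts by \eqref{eq:V4action}; injectivity comes from transitivity of $S_{2N}$ on strings with a fixed tuple together with permutation invariance of $s_X,s_Z$; and surjectivity holds because every valid tuple is realised by some string. Your explicit checks that $t(P)\in\mathcal T$ and that phases are handled consistently are small additions that the paper leaves implicit.
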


\begin{proof}
\emph{Well-defined.} If $[P]_{S_{2N}}=[Q]_{S_{2N}}$ then $\sigma(P)=s\cdot Q$ for some $\sigma\in S_{2N}$ and $s\in V_4$. Permutations preserve tuples, so $t(P)=t(\sigma(P))=t(s\cdot Q)=V_4^{\,s}\bigl(t(Q)\bigr)$ --- where $V_4^{\,s}(\cdot)$ denotes the tuple transformation \eqref{eq:V4action} associated to $s$ --- hence $[t(P)]_{V_4}=[t(Q)]_{V_4}$.

\emph{Injective.} If $[t(P)]_{V_4}=[t(Q)]_{V_4}$ then
$t(Q)=V_4^{\,s}(t(P))=t(s\cdot P)$ for some $s\in V_4$. Since $S_{2N}$ acts transitively on strings of a fixed tuple, there is $\sigma$ with $\sigma(s\cdot P)=Q$; and since stabilisers are permutation-invariant, $\sigma(s\cdot P)=s\cdot\sigma(P)$. Thus $Q=s\cdot\sigma(P)$, giving $[Q]_{S_{2N}}=[P]_{S_{2N}}$.

\emph{Surjective.} Any valid tuple $t$ is realised by a valid string $P$ with $\varphi([P]_{S_{2N}})=[t]_{V_4}$.
\end{proof}

In addition we have, 
\begin{corollary}\label{cor:orbitsize}
For a fixed valid tuple $t$, the number of logical Paulis in the corresponding $S_{2N}$-orbit is
\begin{equation}
  \bigl\lvert[t]_{S_{2N}}\bigr\rvert
  =\frac{M(t)}{\lvert\mathrm{Stab}(t)\rvert},
  \qquad
  M(t)=\frac{(2N)!}{n_I!\,n_X!\,n_Y!\,n_Z!}.
\end{equation}
\end{corollary}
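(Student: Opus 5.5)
The count is obtained by passing from logical operators to physical strings and then dividing by the size of the fibre of that passage. A logical Pauli $\bar P$ (Hermitian, up to global sign) is the coset $P\cdot V_4$ of a physical string $P$. Its $S_{2N}$-orbit therefore corresponds to the set of physical strings
\[
\mathcal{U}(t)=\bigcup_{s\in V_4}\{\,Q : t(Q)=V_4^{\,s}(t)\,\}.
\]
By \cref{thm:bijection}, the logical operators in the orbit are exactly the $V_4$-cosets of strings in $\mathcal{U}(t)$. Each such coset contains exactly four distinct strings, since $s\cdot Q\neq Q$ for $s\neq I$. The first step is therefore
\[
\bigl\lvert[t]_{S_{2N}}\bigr\rvert=\frac{\lvert\mathcal{U}(t)\rvert}{4}.
\]

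The second step counts $\mathcal{U}(t)$. Strings with a fixed tuple $t'$ are counted by the multinomial $M(t')$. Since $V_4$ only permutes the entries of $t$, $M(t')=M(t)$ for every $t'$ in the $V_4$-orbit. Distinct tuples give disjoint string sets, so
\[
\lvert\mathcal{U}(t)\rvert=\bigl\lvert[t]_{V_4}\bigr\rvert\,M(t)=\frac{4\,M(t)}{\lvert\mathrm{Stab}(t)\rvert},
\]
where the last equality is the orbit–stabiliser relation $\lvert[t]_{V_4}\rvert=4/\lvert\mathrm{Stab}(t)\rvert$ already established. Dividing by $4$ gives $M(t)/\lvert\mathrm{Stab}(t)\rvert$.

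The main point needing care is the claim that the map from strings in $\mathcal{U}(t)$ to logical operators is exactly $4$-to-$1$. For this I would check two things. First, $\mathcal{U}(t)$ is closed under multiplication by $V_4$: multiplying by $s$ sends tuple $t'$ to $V_4^{\,s}(t')$, which stays in the orbit. Second, two strings define the same logical Pauli up to sign iff they differ by an element of $V_4$, and the phases of $Y$ versus $XZ$ are absorbed by working up to global sign. The case $\lvert\mathrm{Stab}(t)\rvert>1$ is where a naive count would overcount, because then $s\cdot Q$ has the same tuple as $Q$. The argument handles this automatically: the $V_4$-orbit of tuples is smaller, so $\mathcal{U}(t)$ contains fewer tuple classes, while each coset still has four distinct strings.
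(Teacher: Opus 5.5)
Your proof is correct. It is the same double count as the paper's, but sliced differently.

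The paper counts only the $M(t)$ strings whose tuple is exactly $t$. It then asserts that each logical operator in the orbit is represented by exactly $\lvert\mathrm{Stab}(t)\rvert$ of them, namely the strings $s\cdot Q$ with $s\in\mathrm{Stab}(t)$.

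You instead count every representative of every logical in the orbit. Their tuples fill the whole orbit $[t]_{V_4}$, so the fibres all have size $4$ because stabiliser multiplication acts freely on strings. The factor $\lvert\mathrm{Stab}(t)\rvert$ then enters only through the already-established identity $\lvert[t]_{V_4}\rvert=4/\lvert\mathrm{Stab}(t)\rvert$.

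What your version buys is that the fibre-size step reduces to the evident fact $s\cdot Q\neq Q$ for $s\neq I$. The paper's slice-by-slice version leaves two true facts unstated: every logical in the orbit has a representative of tuple exactly $t$, and $s\cdot Q$ has tuple $t$ if and only if $s\in\mathrm{Stab}(t)$.
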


\begin{proof}
    Let us consider the valid tuple $t=(n_I,n_X,n_Y,n_Z)$. The number of all possible Pauli strings we can write that have this tuple is precisely given by $M(t)$. Among these however there may be some which have the same logical action, i.e. are related to each other by multiplication of stabilisers. This happens if $\lvert\mathrm{Stab}(t)\rvert>1$, and the number of physical strings sharing that logical action is precisely $\lvert\mathrm{Stab}(t)\rvert$.
\end{proof}
By \cref{thm:bijection}, to count the number of non-trivial $S_{2N}$-orbits, i.e. orbits not equivalent to the logical identity, it suffices to count $V_4$-orbits of valid tuples. Burnside's lemma gives
\begin{equation}
  \#\text{orbits}=\frac14\sum_{g\in V_4}\bigl\lvert\mathrm{Fix}(g)\bigr\rvert .
\end{equation}

\paragraph{$\mathrm{Fix}(e)$} The parity-consistent tuples summing to $2N$ are either all even, $n_\alpha=2a_\alpha$ with $\sum_\alpha a_\alpha=N$, or all odd, $n_\alpha=2a_\alpha+1$ with $\sum_\alpha a_\alpha=N-2$. There are $\binom{N+3}{3}$ and $\binom{N+1}{3}$ such tuples, respectively. Among the all-even tuples, exactly the four pure tuples \eqref{eq:pure} are invalid; they form a single $V_4$-orbit (trivial stabiliser) and are removed. Hence
\begin{equation}
  \lvert\mathrm{Fix}(e)\rvert=\binom{N+3}{3}+\binom{N+1}{3}-4 .
\end{equation}

\paragraph{$\mathrm{Fix}(s_X),\ \mathrm{Fix}(s_Z),\ \mathrm{Fix}(s_Xs_Z)$} A tuple fixed by $s_X$ has $n_I=n_X=a$, $n_Y=n_Z=b$ with $a+b=N$. Common parity forces $a\equiv b\pmod 2$, whereas $a+b=N$ forces $a\not\equiv b$ when $N$ is odd. Thus these fixed sets are empty for odd $N$, and for even $N$ each contains the $N+1$ tuples with $a=0,\dots,N$ (none of which is pure). The same holds for $s_Z$ and $s_Xs_Z$ by symmetry.

We have then that 
\begin{itemize}
    \item $N$ odd \begin{equation}
         \#\text{orbits} =\frac14\Bigl[\tbinom{N+3}{3}+\tbinom{N+1}{3}-4\Bigr]
    \end{equation}
    \item $N$ even \begin{equation}
          \#\text{orbits}  =\frac14\Bigl[\tbinom{N+3}{3}+\tbinom{N+1}{3}-4+3(N+1)\Bigr]
    \end{equation}
\end{itemize}

\section{Measurement protocol}\label{app:gaugedet}
In this appendix, we explore in detail the protocol from the perspective of Hamiltonian lattice gauge theory. Let us consider the generic connected graph $G=(V,E)$ for the gadget of the operator $P=\prod_{v\in V} P_v$. This graph defines a gauge theory coupling the gauge qubits living on the edges of the graph with the matter qubit of the original code. Define the vertex operators 
\begin{equation}
    G_v =P_v \prod_{e\ni v
} \tau_e^{X} \quad \forall v\in V
\end{equation}
and plaquette operators 
\begin{equation}
    B_c = \prod_{e\in p} \tau_e^Z
\end{equation}
for each independent cycle of the graph. Together with these two operators, gauging the theory dresses the hamiltonian to $H=-W_X-W_Z$ where the dressed stabilisers are given in equation \eqref{eq:dressedstab}.

The protocol starts with the matter system in the code space of the Iceberg code, i.e. in a state $\ket{\psi}$ which is a stabiliser state of $X^{\otimes 2N},Z^{\otimes 2N}$. The operator $P$ is a logical Pauli operator whose measurement statistics on this state are given by Born rule. For each eigenvalue $\mu\in\{\pm 1\}$ of $P$, we define the projector
\begin{equation}
    \Pi_{\mu}=\frac{I+\mu P}{ 2},
\end{equation}
obtaining a probability of measuring $\mu$ given by
\begin{equation} \label{eq:probdistr}
    \mathrm{Pr}[\mu]=\braket{\psi|\Pi_\mu|\psi}= \frac{1+\mu \langle P\rangle_\psi}2,
\end{equation}
and a post-measurement state given by \begin{equation}\label{eq:postmeasurstate}
    \ket{\psi_{\mu}}\propto \Pi_{\mu} \ket{\psi}.
\end{equation}

The purpose of the protocol is to obtain the correct probability distribution \eqref{eq:probdistr} and the post-measurement state \eqref{eq:postmeasurstate}. The protocol begins by initialising the combined system in the state $\ket{\psi} \ket{0}^{\otimes |E|}$ by resetting the gauge qubits as explained in the main text.

This initial state is not gauge invariant since it does not satisfy the physical constraints $G_v\ket{\varphi}=\ket{\varphi}$. Indeed, the initial state is an eigenstate of the gauge field $\tau_e^Z$, which anticommutes with the vertex operator $G_v$ whenever $v$ is an endpoint of $e$, meaning that $\langle G_v\rangle=0$. The effect of this is that the outcome $g_v$ of measuring each $G_v$ is non-deterministic with equal probability $\pm 1$. However, the random variable $\mu=\prod_v g_v$ has the same probability distribution as measuring $P$ and obtaining $\mu$. We have
\begin{equation}
    \mathrm{Pr}\left[\prod_v g_v=\mu \right]=\sum_{g: \prod_v g_v=\mu} \mathrm{Pr}[g],\quad \mathrm{Pr}[g]=\langle\prod_{v\in V} \frac{I+g_v G_v}{2} \rangle,
\end{equation}
expanding further we find that 
$$\prod_{v\in V} \frac{I+g_v G_v}{2} =\frac{1}{2^{|V|}}\sum_{A\subseteq V}g_A G_A, \quad g_a:= \prod_{v\in A} g_v,\quad G_A=\prod_{v\in A}G_v$$ so that $$\mathrm{Pr}[\mu]=\frac{1}{2^{|V|}}\sum_{A\subseteq V} \left(\sum_{g:\prod_v g_v=\mu}g_A\right)\braket{G_A}$$
Now, considering the quantity in parentheses, we find
\begin{equation}
        S(A):=\sum_{g:\prod_v g_v=\mu} g_A = \sum_{g\in \{\pm 1\}^V} g_A \frac{1+\mu g_V}{2}=\frac{1}{2}\left[\sum_g g_A+\mu \sum_g g_A g_V\right]
\end{equation}
where $g_V=\prod_{v\in V} g_V$. To simplify this expression we note first that 
\begin{equation}
    \sum_{g\in \{\pm 1\}^V} g_A= \prod_{v\in A} \left(\sum_{g_v=\pm 1} g_v\right) \prod_{v\notin A} 2=\begin{cases}
        2^{|V|}&\text{if } A=\emptyset\\
        0  & \text{otherwise}
    \end{cases} 
\end{equation}
Further, using that $g_A g_V=\prod_{v\in A}g_v^2 \prod_{v\notin A}g_v=g_{A^c}$, we obtain 
\begin{equation}
    \sum_{g\in \{\pm 1\}^V} g_A g_V=\begin{cases}
        2^{|V|}&\text{if } A=V\\
        0  & \text{otherwise}
    \end{cases} 
\end{equation}
and therefore 
\begin{equation}
    S(A)=\begin{cases}
        2^{|V|-1}&  A=\emptyset\\
        \mu 2^{|V|-1}&  A=V\\
        0 & \text{otherwise}
    \end{cases}
\end{equation}
Finally, we conclude
\begin{equation}
    \mathrm{Pr}[\mu]=\frac{1}{2}[\braket{G_\emptyset}+\mu \braket{G_V} ]=\frac{1+\mu \braket{P}}{2}
\end{equation}
which is the expected probability distribution.

We conclude that before measurement the state is prepared in a superposition over all possible charge configurations $g$, with each configuration carrying a uniform probability $\mathrm{Pr}[g]=\mathrm{Pr}[\mu(g)]/2^{|V|-1}$ across the $2^{|V|-1}$ configuration sharing a given parity $\mu$. The two parity sectors are weighted overall by the probability $\mathrm{Pr}[\mu]=\frac{1+\mu \braket{P}}{2}$.

In the next step of the protocol, we collapse the state to a particular charge configuration by measuring each $G_v$. The projector in the configuration $g$ is given by \begin{equation}
    \Pi_g =\prod_{v\in V} \frac{I+g_v G_v}{2}=\frac{1}{2^{|V|}}\sum_{A\subseteq V}g_{A}\prod_{v\in A}G_{v}.
\end{equation}
To simplify this expression, we note that
\begin{equation}
\prod_{v\in A}G_{v}=\underbrace{\prod_{v\in A}P_{v}}_{P_{A}} \underbrace{\prod_{v\in A}\prod_{e\ni v}\tau_{e}^X}_{{*}}.
\end{equation}
In particular, an edge contributes $\tau_{e}^X$ if and only if it has endpoints in $A$, so introducing  
\begin{equation}
    \delta(A)=\{ e\in E: |\partial e \cap A|=1 \}\quad (A\subseteq V)
\end{equation} 
we have $$\prod_{v\in A}G_{v}=P_{A}\prod_{e\in \delta(A)}\tau_{e}^X.$$ The post-measurement state is then given by 
\begin{equation}
    \ket{\varphi_g}=\Pi_g \ket{\psi}\ket{0}^{\otimes |E|}=\frac{1}{2^{|V|}}\sum_{A\subseteq V}g_{A}P_{A}\ket{\psi}\otimes \ket{1_{\delta(A)}} .  
\end{equation}
where we have introduced the cut state configuration of the gauge field $\ket{1_{\delta(A)}}:=\prod_{e\in \delta(A)} \tau_e^x \ket{0}^{\otimes |E|}$. Using the fact that $\delta(A)=\delta(A^c)$ so that $P_{A^c}=P_{A}P$ and $g_{A^c}=g_{A}\mu$ we find that $$g_{A}P_{A}\ket{\psi}\ket{1_{\delta(A)}}+g_{A^C}P_{A^C}\ket{\psi}\ket{1_{\delta(A^C)}}=g_{A}P_{A}(I+\mu P)\ket{\psi}\ket{1_{\delta(A)}}      $$so that finally \begin{equation}
    \ket{\tilde{\psi}}=\frac{1}{2^{|V|-1}}\sideset{}{'}\sum_{A\subseteq V }  g_{A}P_{A}\ket{\psi_{\mu}}\ket{1_{\delta(A)}} 
\end{equation} 
where the prime over the sum denotes a summation over $A$ for pairs $(A,A^c)$ and $\ket{\psi_{\mu}}=\frac{I+\mu P}{2}\ket{\psi}$.

The state $\ket{\tilde{\psi}}$ has the charge sector fixed to $\mu$, but the matter remains entangled with the gauge field since each term pairs a Pauli dressing $P_A$ with a field configuration $\ket{1_{\delta(A)}}$ supported on the cut $\delta(A)$. Since the $\ket{1_{\delta(A)}}$ are exactly the flat field configurations, this is a coherent superposition over all gauge-equivalent representatives of the same fixed charge---a redundancy the following step removes by measuring the gauge field directly.

Measuring the gauge field in the $\tau^Z$ basis --- the third step of the protocol --- acts as a gauge-fixing: it collapses the superposition over gauge-equivalent cut representatives onto a single one, disentangling matter from gauge, and the state collapses to one satisfying $m=\delta(A)$, where $m_e=\frac{1-w_e}{2}$ with $w_e=\pm1$ denoting the outcome of the measurement. We can discard the gauge qubits and we are left with $P_A \ket{\psi_\mu}$ up to phase. 

The final step is to determine $A$ given $m$. The solution is not unique---if $A$ is a solution then so is $A^c$---but we have in either case $P_{A^{c}}= P_A P$, so $P_{A^c} P_A \ket{\psi_\mu}\propto \ket{\psi_\mu}$. We obtain the following proposition
\begin{proposition}
     A solution to $m = \delta(A)$  exists for a connected graph if and only if $\sum_{e \in c} m_e=0 \pmod{2}$ for every cycle $c$ of the graph.
\end{proposition}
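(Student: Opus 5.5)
The plan is to prove the two directions separately, working over $\mathbb{F}_2$ with the coboundary map $\delta:\mathbb{F}_2^V\to\mathbb{F}_2^E$ introduced above. It is the transpose of the boundary map $\partial$ from \cref{sec:gauging-general}.

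\emph{Necessity.} Suppose $m=\delta(A)$ and fix a cycle $c$. I will show $\sum_{e\in c}\delta(A)_e=0$. Write this sum as $\sum_{e\in c}\sum_{v\in\partial e}[v\in A]$ and exchange the order of summation. The result is $\sum_{v\in A}\deg_c(v)$, where $\deg_c(v)$ counts the edges of $c$ meeting $v$. A cycle meets every vertex in an even number of edges (the same fact already used to show $B_c$ commutes with the Gauss laws), so each term is even and the sum vanishes mod $2$. In lattice gauge language: a pure-gauge field configuration has trivial flux through every loop.

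\emph{Sufficiency.} Here I will use the construction from the splitting step. Fix a spanning tree $T$ of the connected graph and a root $r$. Define $\varphi_v=\bigoplus_{e\in\Pi_v}m_e$, where $\Pi_v$ is the tree path from $r$ to $v$, and set $A=\{v:\varphi_v=1\}$. By definition $\delta(A)_e=\varphi_u\oplus\varphi_w$ for $e=(u,w)$, so it suffices to check that $m_e=\varphi_u\oplus\varphi_w$ for every edge.
\begin{itemize}
\item If $e\in T$, then one of $\Pi_u,\Pi_w$ extends the other by exactly $e$, so the identity holds immediately.
\item If $e\notin T$, then $\Pi_u$, $\Pi_w$ and $e$ form a closed walk. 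Its shared segment toward $r$ cancels mod $2$, leaving the fundamental cycle $c_e=\{e\}\cup(\Pi_u\triangle\Pi_w)$. The hypothesis applied to $c_e$ gives $m_e=\bigoplus_{f\in\Pi_u\triangle\Pi_w}m_f=\varphi_u\oplus\varphi_w$.
\end{itemize}

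The main obstacle, mild as it is, lies in the second case. I must make sure that $\Pi_u\triangle\Pi_w$ together with $e$ is a genuine cycle (or at least an even-degree edge set), so that the hypothesis applies. If the statement's "cycle" is read as a simple cycle, this is fine, since tree paths meet in a common initial segment from the root. Alternatively, I can note that the hypothesis on simple cycles extends by linearity to the whole cycle space, because every even-degree edge set is a disjoint union of simple cycles. Either reading makes the argument go through. As a consistency check, the solution set is $A$ and $A^c$: $\ker\delta$ consists of the constant functions on a connected graph, which matches the $A\leftrightarrow A^c$ ambiguity noted in the text. Equivalently, $\operatorname{im}\delta=(\ker\partial)^\perp$ is the orthogonal complement of the cycle space, which is the statement in dual form.
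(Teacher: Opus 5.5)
Your proof is correct and follows essentially the same route as the paper. For necessity, your double count $\sum_{v\in A}\deg_c(v)$ is the paper's adjointness identity $\langle\delta(A),c\rangle=\langle A,\partial c\rangle=0$ written out explicitly. For sufficiency, you use the same spanning-tree potential $\varphi_v$, splitting into tree edges and non-tree edges closed by the fundamental cycle $\{e\}\cup(\Pi_u\triangle\Pi_w)$; your check that this edge set is a genuine cycle is a detail the paper leaves implicit.
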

\begin{proof}
Let us first give some definitions. The graph $G=(V,E)$ may be encoded algebraically by two $\mathbb{F}_2$-vector spaces, $C_0:= \mathbb{F}_2^{|V|}$ and $C_1:= \mathbb{F}_2^{|E|}$, with canonical bases given by the vertex set $V$ and edge set $E$ respectively: a basis vector of $C_0$ corresponds to a single vertex $v$, and a basis vector of $C_1$ corresponds to a single edge $e$. A general element of $C_0$ is then a vertex subset $A\subseteq V$, identified with its indicator vector $1_A\in C_0$; likewise a general element of $C_1$ is an edge subset, identified with its indicator vector.

We can then define a linear map between these two vector space 
\begin{equation}
\begin{split}
    &\partial: C_1  \to C_0\\
    &\partial(e)=u+v\pmod{2}\quad \text{for } e=(u,v)
\end{split}
\end{equation}
extended linearly to all of $C_1$. This map is a boundary map, and with respect to the canonical bases it is represented by the incidence matrix of the graph. In addition, we define the two bilinear maps \begin{equation}
    \begin{split}
        &\braket{\cdot,\cdot}_{C_0}: C_0\to \mathbb{F}_2\\
        &\braket{x,y}_{C_{0}}=\sum_{v\in V}x_{v}y_{v}\pmod{2} 
    \end{split}
\end{equation}
and \begin{equation}
    \begin{split}
            &\braket{\cdot,\cdot}_{C_1}: C_1\to \mathbb{F}_2\\
        &\braket{u,v}_{C_{1}}=\sum_{e\in E} u_{e}v_{e} \pmod{2}
    \end{split}
\end{equation}
which allow us to define the coboundary map $\delta:C_0\to C_1$ as the adjoint of $\partial$, i.e. the map satisfying\begin{equation}
    \langle \partial u,x \rangle_{C_{0}}=\langle u,\delta x \rangle_{C_{1}}\quad \forall u\in C_{1},x\in C_{0}  
\end{equation}
Concretely, given the vertex set $A$ which we represent by the vector $1_{A}$, we have $\delta(1_{A})_{e}=1$ if and only if $e$ has exactly one endpoint in $A$. Finally, we introduce the cycle space $Z_1:= \ker \partial \subseteq C_1$, since for each cycle $c$ we have $\partial c =0$ and the cut space $B^{1}=\mathrm{im} \delta \subseteq C_1$.

Now let us suppose that $m=\delta(A)$ for some $A\subseteq V$, then for any cycle $c$ of the graph we have that $\braket{m,c}_{C_1}=\braket{\delta(A),c}_{C_1}=\braket{A,\partial c}_{C_0} =0$ where we use $m=\delta(A)$ in the first equality and $\partial c=0$ in the last. This means that $\sum_{e\in c} m_e = 0$ and so $\prod_{e\in c} (-1)^{m_e}=1$ which is what we expect from consistency with $B_c=+1$ for each cycle.

Conversely, let us consider a spanning tree $T$ of a connected graph, i.e. a subset of edges such that $(V,T)$ is an acyclic graph including every vertex with a specified root vertex $r\in V$. For each $v\in V$ we define the unique tree path $\Gamma_v\subseteq T$ that spans from $r$ to $v$ and the function \begin{equation}
        \varphi_v = \braket{m,\Gamma_v}_{C_1}=\sum_{e\in \Gamma_v}m_e \pmod{2}.
    \end{equation}
We then construct the sets \begin{equation}
        A=\{v\in V: \varphi_v = 1\}
    \end{equation} 
    which solve the equation $m=\delta(A)$. In particular, for every edge $e\in E$ we have $m_{e}=1$ if and only if  $e\in \delta(A)$.

Now let us consider a tree edge $e=(u,v)\in T$. We have that $\Pi_v = \Pi_u \cup \{e\}$ which gives \begin{equation}
        \varphi_v= \braket{m,\Pi_v} = \braket{m,\Pi_u} +m_e = \varphi_u \oplus m_e.
    \end{equation} 
    
Equivalently, $m_e = \varphi_u \oplus \varphi_v$ which is $1$ exactly when $\varphi_{u}\neq \varphi_{v}$, i.e. exactly when one of $u,v$ is in $A$ and the other is in the complement. We conclude $m_{e}=[e\in \delta(A)]$ for every tree edge $e$.
    
Now, let $e$ be a non tree edge, $e=(u,v)\in E\setminus T$. Since $T\cup \{ e \}$ contains a unique cycle (adding any edge to a spanning tree creates exactly one cycle), which we denote $c := \Pi_u \,\triangle\, \{e\} \,\triangle\, \Pi_v$ where $\overline{\Pi}_{v}$ is the tree path $\Pi_{v}$ traversed in the reverse direction. Since $\langle m,c \rangle=0$ we have $$0=\langle m,c \rangle=\langle m,\Pi_{u} \rangle \oplus \langle m,\Pi_{v} \rangle\oplus m_{e}=\varphi_{u}\oplus \varphi_{v}\oplus m_{e}   $$ and we conclude that $m_{e}=[e\in \delta(A)]$.
\end{proof}

The above proposition demonstrates how $A$ may be determined from the measurement of the gauge field. After determining $A$, we apply $P_A$ to recover the post measurment state $\ket{\psi_\mu}$.

\bibliographystyle{unsrtnat}
\bibliography{refs}

@article{self2024protecting,
  title={Protecting expressive circuits with a quantum error detection code},
  author={Self, Chris N and Benedetti, Marcello and Amaro, David},
  journal={Nature Physics},
  volume={20},
  number={2},
  pages={219--224},
  year={2024},
  publisher={Nature Publishing Group UK London}
}

@article{chao2018fault,
  title={Fault-tolerant quantum computation with few qubits},
  author={Chao, Rui and Reichardt, Ben W},
  journal={npj Quantum Information},
  volume={4},
  number={1},
  pages={42},
  year={2018},
  publisher={Nature Publishing Group UK London}
}

@article{horsman2012surface,
  title={Surface code quantum computing by lattice surgery},
  author={Horsman, Dominic and Fowler, Austin G and Devitt, Simon and Meter, Rodney Van},
  journal={New Journal of Physics},
  volume={14},
  number={12},
  pages={123011},
  year={2012},
  publisher={IOP Publishing}
}

@article{bravyi2016trading,
  title={Trading classical and quantum computational resources},
  author={Bravyi, Sergey and Smith, Graeme and Smolin, John A},
  journal={Physical Review X},
  volume={6},
  number={2},
  pages={021043},
  year={2016},
  publisher={APS}
}

@article{williamson2026low,
  title={Low-overhead fault-tolerant quantum computation by gauging logical operators},
  author={Williamson, Dominic J and Yoder, Theodore J},
  journal={Nature Physics},
  pages={1--6},
  year={2026},
  publisher={Nature Publishing Group UK London}
}

@article{chamberland2018flag,
  title={Flag fault-tolerant error correction with arbitrary distance codes},
  author={Chamberland, Christopher and Beverland, Michael E},
  journal={Quantum},
  volume={2},
  pages={53},
  year={2018},
  publisher={Verein zur F{\"o}rderung des Open Access Publizierens in den Quantenwissenschaften}
}

@article{vuillot2019code,
  title={Code deformation and lattice surgery are gauge fixing},
  author={Vuillot, Christophe and Lao, Lingling and Criger, Ben and Garc{\'\i}a Almud{\'e}ver, Carmen and Bertels, Koen and Terhal, Barbara M},
  journal={New Journal of Physics},
  volume={21},
  number={3},
  pages={033028},
  year={2019},
  publisher={IOP Publishing}
}

@article{gidney2021stim,
  title={Stim: a fast stabilizer circuit simulator},
  author={Gidney, Craig},
  journal={Quantum},
  volume={5},
  pages={497},
  year={2021},
  publisher={Verein zur F{\"o}rderung des Open Access Publizierens in den Quantenwissenschaften}
}

@article{bravyi2005universal,
  title={Universal quantum computation with ideal Clifford gates and noisy ancillas},
  author={Bravyi, Sergey and Kitaev, Alexei},
  journal={Physical Review A—Atomic, Molecular, and Optical Physics},
  volume={71},
  number={2},
  pages={022316},
  year={2005},
  publisher={APS}
}

@article{sayginel2025fault,
  title={Fault-tolerant logical clifford gates from code automorphisms},
  author={Sayginel, Hasan and Koutsioumpas, Stergios and Webster, Mark and Rajput, Abhishek and Browne, Dan E},
  journal={PRX Quantum},
  volume={6},
  number={3},
  pages={030343},
  year={2025},
  publisher={APS}
}

@article{berthusen2025automorphism,
  title={Automorphism gadgets in homological product codes},
  author={Berthusen, Noah and Gullans, Michael J and Hong, Yifan and Mudassar, Maryam and Tan, Shi Jie Samuel},
  journal={arXiv preprint arXiv:2508.04794},
  year={2025}
}

@article{bluvstein2022quantum,
  title={A quantum processor based on coherent transport of entangled atom arrays},
  author={Bluvstein, Dolev and Levine, Harry and Semeghini, Giulia and Wang, Tout T and Ebadi, Sepehr and Kalinowski, Marcin and Keesling, Alexander and Maskara, Nishad and Pichler, Hannes and Greiner, Markus and others},
  journal={Nature},
  volume={604},
  number={7906},
  pages={451--456},
  year={2022},
  publisher={Nature Publishing Group UK London}
}

@article{bluvstein2024logical,
  title={Logical quantum processor based on reconfigurable atom arrays},
  author={Bluvstein, Dolev and Evered, Simon J and Geim, Alexandra A and Li, Sophie H and Zhou, Hengyun and Manovitz, Tom and Ebadi, Sepehr and Cain, Madelyn and Kalinowski, Marcin and Hangleiter, Dominik and others},
  journal={Nature},
  volume={626},
  number={7997},
  pages={58--65},
  year={2024},
  publisher={Nature Publishing Group UK London}
}

@article{cohen2022low,
  title={Low-overhead fault-tolerant quantum computing using long-range connectivity},
  author={Cohen, Lawrence Z and Kim, Isaac H and Bartlett, Stephen D and Brown, Benjamin J},
  journal={Science Advances},
  volume={8},
  number={20},
  pages={eabn1717},
  year={2022},
  publisher={American Association for the Advancement of Science}
}

@article{cross2024improved,
  title={Improved QLDPC surgery: Logical measurements and bridging codes},
  author={Cross, Andrew W and He, Zhiyang and Rall, Patrick J and Yoder, Theodore J},
  journal={arXiv preprint arXiv:2407.18393},
  year={2024}
}

@book{gottesman1997stabilizer,
  title={Stabilizer codes and quantum error correction},
  author={Gottesman, Daniel},
  year={1997},
  publisher={California Institute of Technology}
}

@article{poulin2005stabilizer,
  title={Stabilizer formalism for operator quantum error correction},
  author={Poulin, David},
  journal={Physical review letters},
  volume={95},
  number={23},
  pages={230504},
  year={2005},
  publisher={APS}
}

@article{bombin2009quantum,
  title={Quantum measurements and gates by code deformation},
  author={Bomb{\'\i}n, H{\'e}ctor and Martin-Delgado, Miguel Angel},
  journal={Journal of Physics A: Mathematical and Theoretical},
  volume={42},
  number={9},
  pages={095302},
  year={2009}
}

@article{cowtan2024css,
  title={CSS code surgery as a universal construction},
  author={Cowtan, Alexander and Burton, Simon},
  journal={Quantum},
  volume={8},
  pages={1344},
  year={2024},
  publisher={Verein zur F{\"o}rderung des Open Access Publizierens in den Quantenwissenschaften}
}

@article{gottesman1998theory,
  title={Theory of fault-tolerant quantum computation},
  author={Gottesman, Daniel},
  journal={Physical Review A},
  volume={57},
  number={1},
  pages={127},
  year={1998},
  publisher={APS}
}

@article{swaroop2026universal,
  title={Universal adapters between quantum low-density parity check codes},
  author={Swaroop, Esha and Jochym-O'Connor, Tomas and Yoder, Theodore J},
  journal={PRX Quantum},
  volume={7},
  number={1},
  pages={010324},
  year={2026},
  publisher={APS}
}

@article{reichardt2024logical,
  title={Logical computation demonstrated with a neutral atom quantum processor},
  author={Reichardt, Ben W and Paetznick, Adam and Aasen, David and Basov, Ivan and Bello-Rivas, Juan M and Bonderson, Parsa and Chao, Rui and van Dam, Wim and Hastings, Matthew B and Paz, Andres and others},
  journal={arXiv preprint arXiv:2411.11822},
  volume={10},
  year={2024}
}

@article{zhang2026logical,
  title={Logical qubits with erasure conversion using metastable neutral atoms},
  author={Zhang, Bichen and Liu, Genyue and Bornet, Guillaume and Horvath, Sebastian P and Peng, Pai and Ma, Shuo and Huang, Shilin and Puri, Shruti and Thompson, Jeff D},
  journal={Nature Physics},
  volume={22},
  number={6},
  pages={910--916},
  year={2026},
  publisher={Nature Publishing Group}
}

@article{lib2026velocity,
  title={Velocity-enabled quantum computing with neutral atoms},
  author={Lib, Ohad and Timme, Hendrik and Ammenwerth, Maximilian and Gyger, Flavien and Tao, Renhao and Sun, Shijia and Bloch, Immanuel and Zeiher, Johannes},
  journal={arXiv preprint arXiv:2603.15561},
  year={2026}
}

@article{mathiot2026benchmarking,
  title={Benchmarking a machine-learning differential equations solver on a neutral-atom logical processor},
  author={Mathiot, Pauline and Garnaoui, Elio and Leriche, Axel-Ugo and Philip, Evan and Albrecht, Boris and Briosne-Fr{\'e}javille, Cl{\'e}mence and Cardarelli, Lorenzo and Cornillot, Antoine and Cournez, Gwennol{\'e} and Couturier, Luc and others},
  journal={arXiv preprint arXiv:2605.21276},
  year={2026}
}

@article{webster2025explicit,
  title={Explicit construction of low-overhead gadgets for gates on quantum LDPC codes},
  author={Webster, Paul and Smith, Samuel C and Cohen, Lawrence Z},
  journal={arXiv preprint arXiv:2511.15989},
  year={2025}
}

@article{kogut1975hamiltonian,
  title={Hamiltonian formulation of Wilson's lattice gauge theories},
  author={Kogut, John and Susskind, Leonard},
  journal={Physical Review D},
  volume={11},
  number={2},
  pages={395},
  year={1975},
  publisher={APS}
}

@article{kogut1979introduction,
  title={An introduction to lattice gauge theory and spin systems},
  author={Kogut, John B},
  journal={Reviews of Modern Physics},
  volume={51},
  number={4},
  pages={659},
  year={1979},
  publisher={APS}
}

\end{document}